\definecolor{Gray}{gray}{0.9}
\definecolor{White}{gray}{1}
\newcolumntype{P}[1]{>{\centering\arraybackslash}m{#1}}
\newcolumntype{A}{>{\columncolor{White}}P}
\def\BibTeX{{\rm B\kern-.05em{\sc i\kern-.025em b}\kern-.08em
    T\kern-.1667em\lower.7ex\hbox{E}\kern-.125emX}}
\newtheorem{prop}{Proposition}
\newtheorem{remark}{Remark}
\def\tcb{\textcolor{black}}
\def\tcr{\textcolor{black}}
\def\tcbb{\textcolor{black}}
\begin{document}
\bstctlcite{IEEEexample:BSTcontrol}
\title{Embracing Beam-Squint Effects for\\Wideband LEO Satellite Communications:\\A 3D Rainbow Beamforming Approach}

\author{Juha Park, Seokho Kim,  Wonjae Shin, and H. Vincent Poor%\vspace{-5mm}
    \thanks{%{This work was supported in part by the National Research Foundation of Korea (NRF) grants (No.2022R1A2C4002065), in part by the Institute of Information \& Communications Technology Planning \& Evaluation (IITP) grants (No.2021-0-00260, No.2022-0-00704).}
    %\textit{(Corresponding author: Wonjae Shin.)}
    }
    \thanks{J. Park, S. Kim, and W. Shin are with the School of Electrical Engineering, Korea University, Seoul 02841, South Korea 
    (email: {\texttt{\{juha, seokho98, wjshin\}@korea.ac.kr}}); H. V. Poor is with the Department of Electrical and Computer Engineering, Princeton University, Princeton, NJ 08544, USA (email: {\texttt{poor@princeton.edu}}).
    }} 
%\markboth{Submitted to IEEE Transactions on Wireless Communications}
%{Shell \MakeLowercase{\textit{et al.}}: Bare Demo of IEEEtran.cls for IEEE Journals}
\maketitle
\begin{abstract}
\tcb{Low Earth Orbit (LEO) satellite communications (SATCOM) offers high-throughput, low-latency global connectivity to a very large number of users. To accommodate this demand with limited hardware resources, beam hopping (BH) has emerged as a prominent approach in LEO SATCOM. However, its time-domain switching mechanism confines coverage to a small fraction of the service area during each time slot, exacerbating uplink throughput bottlenecks and latency issues as the user density increases.} Meanwhile, wideband systems experience the \emph{beam-squint effect}, where analog beamforming (BF) directions vary with subcarrier frequencies, potentially causing misalignment at certain frequencies, thereby hindering the performance of wideband SATCOM. 
In this paper, we aim to shift the paradigm in wideband LEO SATCOM from beam-squint as an impairment to beam-squint as an asset. Specifically, we put forth \emph{3D rainbow BF} employing a joint phase-time array (JPTA) antenna with true time delay (TTD) to intentionally widen the beam-squint angle, steering frequency-dependent beams toward distributed directions. This novel approach enables the satellite to serve its entire coverage area in a single time slot. By doing so, the satellite simultaneously receives uplink signals from a massive number of users, significantly boosting throughput and reducing latency. To realize 3D rainbow BF, we formulate a JPTA beamformer optimization problem and address the non-convex nature of the optimization problem through a novel joint alternating and decomposition-based optimization framework. \tcb{Through numerical evaluations incorporating realistic 3D LEO SATCOM geometry, our numerical results demonstrate that the proposed rainbow BF-empowered LEO SATCOM achieves up to \num{2.8}-fold increase in uplink throughput compared to conventional BH systems.} These results mark a significant breakthrough for wideband LEO SATCOM, paving the way for high-throughput, low-latency global connectivity.
\end{abstract}

\begin{IEEEkeywords}
Beam-squint effect, low Earth orbit (LEO) satellite, frequency-dependent beamforming
\end{IEEEkeywords}

\section{Introduction}
Low Earth orbit (LEO) satellite communications (SATCOM) has recently garnered considerable attention from academia and industry \cite{kodheli2020satellite}. Unlike terrestrial networks, SATCOM provides ubiquitous connectivity for global service delivery across diverse geographical regions. Furthermore, operating at relatively low altitudes ($\num{300}$-$\num{2000}$ km) compared to geostationary Earth orbit (GEO) and medium Earth orbit (MEO) satellites, LEO satellites achieve reduced latency and enhanced data rates. However, due to the relatively narrow per-satellite coverage of LEO satellites compared to MEO and GEO satellites, LEO constellations with massive satellites are required to provide global connectivity. For example, Starlink's constellation is expected to consist of approximately \num{42000} satellites across different orbits \cite{susanto2024analysis}.
Consequently, reducing the implementation costs of the LEO mega-constellation is a fundamental requirement. Fortunately, the recent development of reusable rockets has significantly reduced LEO satellite launch costs. However, research on cost-effective hardware and signal processing designs remains constrained by conventional narrowband assumption-based design paradigms. This limitation impedes innovative technological advancement in LEO SATCOM systems.

In LEO SATCOM systems, one of the primary challenges in ensuring reliable communication links is significant path loss, resulting from the long propagation distances between satellites and ground user terminals. For instance, at a frequency of 14 GHz and a satellite-user distance of \num{500} km, the free-space path loss amounts to \num{169.35} dB. To overcome this substantial path loss, high-gain beamforming (BF) is essential at both satellite and user terminals. Analog BF technologies, particularly phased array (PA) antenna systems, have emerged as a practical and cost-effective solution, enabling highly directional beams while requiring only a few radio-frequency (RF) chains. For example, Starlink has used a phased array with \num{1280} antenna elements in its user terminal \cite{yang2023starlink}. However, three critical technical challenges exist in analog BF-based SATCOM systems as follows: \emph{i) wideband beam-squint effect}, \emph{ii) limited number of simultaneously active beams}, and \emph{iii) uplink throughput bottleneck in beam hopping (BH) systems}.

\subsection{Technical Challenges}
{\it i)} \textbf{Wideband beam-squint effect}: Although state-of-the-art LEO SATCOM systems (e.g., Starlink) successfully provide global connectivity, their throughput remains inferior to that of terrestrial networks such as 5G new radio (NR). Consequently, LEO SATCOM systems face significant challenges in attracting urban users who can readily access conventional terrestrial networks. Moreover, considering next-generation services that require high data rates, such as digital twins, virtual reality (VR), and extended reality (XR), LEO SATCOM must evolve to support wider bandwidth. Conventional phase shifter (PS)-based analog BF works well in narrowband systems, where subcarrier wavelengths are nearly identical, but faces significant challenges in wideband systems. In wideband analog BF, the distinct wavelengths of subcarriers induce the beam-squint effect \cite{wang2019beam,wan2021hybrid,you2022beam,dai2022delay}, which causes frequency-dependent deviations in beam direction, leading to misalignment between the transmitted beam and the intended target, as shown in Fig. \ref{[Introduction]Beam-squint effect}.
\begin{figure}[t]
\centering
\includegraphics[width=0.85\linewidth]{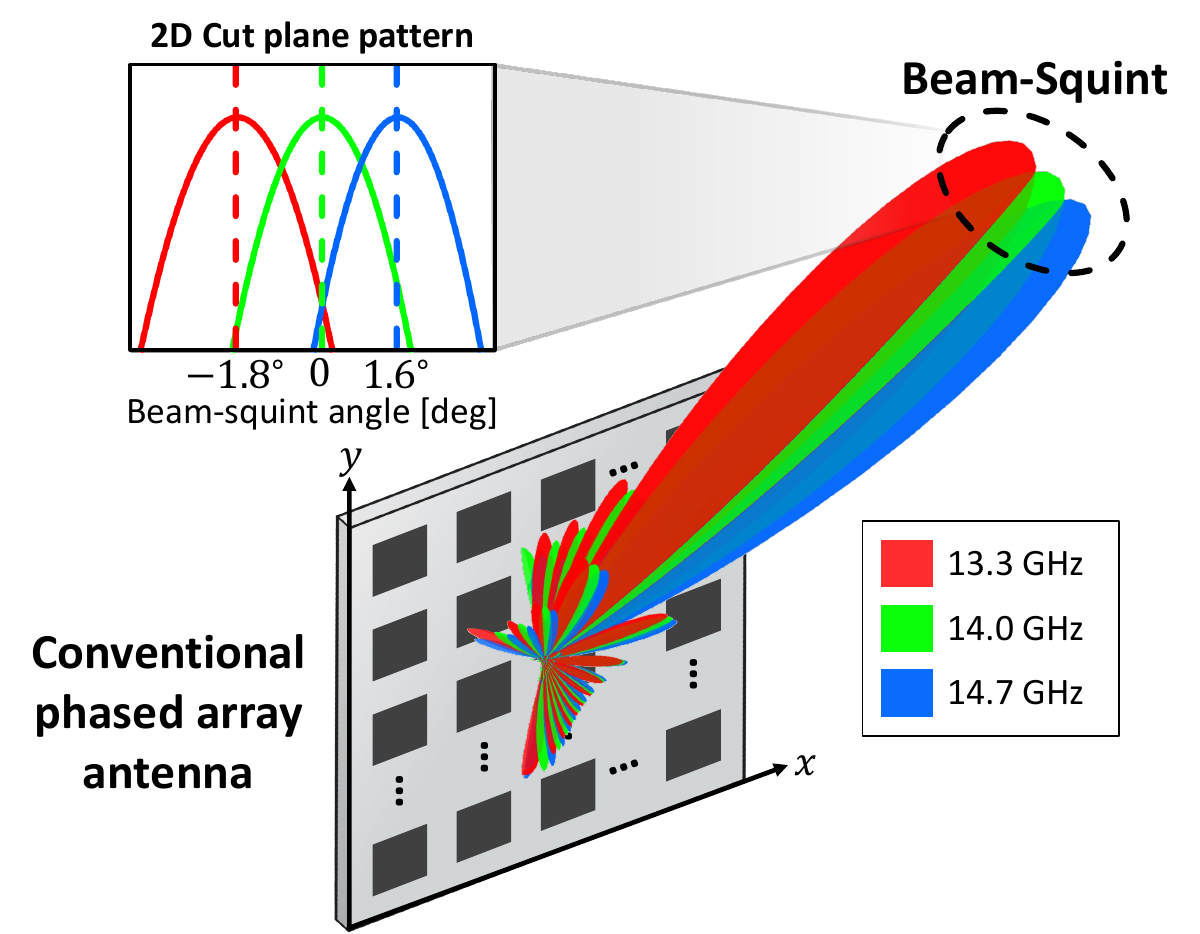}
\caption{Beam-squint effect with $16\times16$ uniform rectangular array antenna with a center frequency of $14$ GHz and a bandwidth of $1.4$ GHz.}
\label{[Introduction]Beam-squint effect}
%\vspace{-3mm}
\end{figure}
The misalignment results in reduced BF gain, significantly degrading system performance. Therefore, the beam-squint effect must be carefully addressed in the design of wideband LEO SATCOM to ensure reliable and efficient communication.

{\it ii)} \tcb{\textbf{Limited number of simultaneously active beams}: Due to their high altitude, LEO satellites provide much broader coverage than terrestrial base stations.} \tcb{In analog BF systems, the number of simultaneously active beams is typically limited by the number of RF chains \cite{ratnam2022joint}.} Each RF chain, comprising various analog/digital signal processing components, e.g., mixers, amplifiers, and converters, demands substantial hardware costs and power consumption. Consequently, increasing the number of RF chains to accommodate additional simultaneous beams becomes impractical in LEO SATCOM systems where hardware cost and energy efficiency are critical design constraints. To serve massive users under these constraints, BH has emerged as a viable solution \cite{DVB,3GPP_BH,tang2021resource,zhang2023system,li2021overview,ouyang2025dependency,lin2022dynamic,8984132,tang2021optimization}. BH operates through quasi-periodic beam-switching in predefined time-space transmission patterns, also known as BH time plan (BHTP). By leveraging dynamic beam-switching, BH enables satellites to serve spatially distributed massive users with limited payload capacity of the satellite. \tcb{BH} has garnered significant attention not only in academia but also from industry and standardization organizations. Specifically, the DVB-S2X standard introduced three frame structures (formats 5-7) with variable frame length and dummy frame capabilities to support the implementation of BH \cite{DVB}. The $3^{\rm rd}$ generation partnership project (3GPP) radio access network working group 1 (RAN WG1) is considering BH technology to enhance downlink coverage in NR-based non-terrestrial network (NR-NTN) systems, as part of discussions in Release 19 \cite{3GPP_BH}. \tcb{While BH is a promising technology for overcoming the limitation of simultaneous active beams in analog BF-based LEO SATCOM, it introduces additional critical challenges arising from its time-domain beam-switching nature.}% These challenges include uplink throughput bottleneck and increased latency.

{\it iii)} \textbf{Uplink throughput bottleneck in BH systems:}
The dynamic beam-switching mechanism of BH leads to an uplink throughput bottleneck in LEO SATCOM systems by limiting the full utilization of available power resources in the uplink scenarios. To clarify this fundamental limitation, it is essential to examine the asymmetric power utilization characteristics between downlink and uplink scenarios of BH systems. In particular, consider $K$ users distributed within the satellite's service coverage, where only a subset of $K'<K$ users are served by the satellite's active beams at any given time instant in BH systems. In downlink transmission, the satellite operates with a centralized power budget $P_{\sf sat}$ and can achieve full power utilization regardless of the instantaneous number of served users $K'$. Conversely, in uplink transmission, each user possesses an individual power budget $P_{\sf user}$, yet only $K'$ users can transmit concurrently. \tcbb{Although users located outside the beam footprints may still transmit, their signals experience severe attenuation due to insufficient beamforming gain, resulting in negligible contribution to the overall system performance. Consequently, it is more efficient to allocate uplink resources exclusively to user located within the beam footprints. As a result, the total utilized uplink power is limited to $K' P_{\sf user}$, while a substantial portion $(K-K')P_{\sf user}$ remains unutilized due to the exclusion of $(K-K')$ users from the beams. Moreover, the power budget of user terminals is more strictly constrained than that of satellites, primarily due to RF exposure regulations and the limited operating ranges of RF amplifiers (e.g., $P_{\sf ut}=33$ dBm for very small aperture terminal (VSAT) and $P_{\sf ut}=23$ dBm for handheld terminal \cite{3gpp38.821}). This limited utilization of the power budget causes a significant uplink throughput bottleneck in LEO SATCOM systems. The problem becomes particularly severe due to the characteristics of LEO SATCOM, where a single satellite must serve a massive number of users simultaneously. In addition, the periodic nature of BH also introduces inherent latency issues.} The uplink
throughput bottleneck in the BH system remains one of the most critical challenges impeding the advancement of LEO SATCOM. Despite its importance, this issue has yet to receive adequate attention in existing literature.

\subsection{Related Works}
\tcb{\tcr{Previous} research investigating the beam-squint effect has primarily focused on terrestrial networks, e.g., performance analysis \cite{yu2021performance,noh2021design}, channel estimation \cite{wang2019beam,jian2019angle,xu2024overcoming}, and precoding/BF \cite{gao2021wideband,chen2020hybrid,you2022beam,ratnam2022joint,alammouri2022extending, ma2023beam, ma2025switch}. Specifically, \cite{yu2021performance} demonstrated that traditional BF codebooks based on narrowband assumptions suffer significant performance degradation when dealing with wide bandwidths. A closed-form metric to evaluate beam-squint, referred to as the ``beam-squint ratio," \tcr{was} proposed in \cite{ma2023beam}. Furthermore, it \tcr{was} shown that the beam squint ratio increases linearly with fractional bandwidth. In \cite{wang2019beam}, a compressed sensing-based channel estimation technique was proposed to mitigate the beam-squint effect in wideband massive multiple-input multiple-output (MIMO) systems. Furthermore, \cite{you2022beam} investigated LEO satellite-based integrated sensing and communications (ISAC) and proposed beam-squint-aware BF optimization techniques. A cost-effective beam-squint mitigation technique based on switch-based hybrid BF architecture \tcr{was} proposed in \cite{ma2025switch}.} \tcb{Recent studies \cite{ratnam2022joint,alammouri2022extending,gao2023integrated,luo2024yolo,kim2023fast,zhai2021ss} have highlighted the potential benefits of leveraging \tcr{the} beam-squint effect in terrestrial systems through controlled beam-squint using the frequency-dependent phase shifts of true time delay (TTD).} Specifically, \cite{luo2024yolo} proposed a one-shot sensing method using a rainbow beam, which intentionally widens the beam-squint angle to achieve simultaneous coverage across the entire angular space. Additionally, \cite{kim2023fast} proposed a rainbow BF-enabled fast beam management method for terrestrial networks. However, the exploitation and potential advantages of the beam-squint effect, such as rainbow BF, have been unexplored in the context of LEO SATCOM. Recently, \cite{sekimori2024frequency} proposed a rainbow BF technique (referred to as ``frequency prism" in their work) for LEO SATCOM utilizing delay-adjustable intelligent reflecting surfaces. Built on the simplified 2D structure, the authors presented a frequency utilization efficiency maximization approach to accommodate heterogeneous throughput demands. However, their framework is not directly applicable to real-world 3D LEO SATCOM. \tcb{Notably, extending 2D rainbow BF to 3D is non-trivial, as 3D rainbow BF must account for both azimuth and elevation (or off-nadir) angles, significantly increasing the degrees of freedom in BF design. In 2D rainbow BF, different beams can be collectively steered across the full angular coverage by simply widening the beam-squint angle in an angular sector, which has a closed-form solution \cite{kim2023fast}. However, in 3D rainbow BF, widening the beam-squint cannot fully cover the entire service coverage with frequency-dependent beams. Hence, unlike 2D rainbow BF where controlling the beam-squint angle is sufficient, 3D rainbow BF requires an additional frequency-direction mapping to precisely determine the 3D beam direction, specifying which frequency components should be directed toward specific spatial directions while incorporating both azimuth and elevation angles. Furthermore, developing algorithms to achieve such mapping presents significantly greater challenges compared to 2D rainbow BF.}

BH has attracted considerable research interest in recent years. In \cite{tang2021resource}, the authors developed resource allocation strategies when LEO satellites share a GEO satellite's spectrum in a BH manner. System-level evaluations of BH in NR-based LEO SATCOM systems were conducted in \cite{zhang2023system}. Furthermore, \cite{li2021overview} analyzed BH algorithms for large-scale LEO constellations, particularly focusing on optimization techniques for BHTP that maximize coverage efficiency. A cooperative multi-agent deep reinforcement learning (MADRL) approach for jointly optimizing BHTP and bandwidth allocation in response to dynamic traffic demands was introduced in \cite{lin2022dynamic}. The authors in \cite{ouyang2025dependency} proposed a dependency-elimination MADRL framework that integrates feeder- and user-links for jointly allocating beams, power, and bandwidth while reducing computational complexity. One of the key technical challenges in BH systems is their inherent latency. Data packets awaiting transmission must queue until their corresponding beam is illuminated \cite{8984132}. Therefore, when designing the BHTP, it is crucial to ensure that the beam revisit time remains below a specified threshold to maintain reliable real-time services. Exceeding this threshold can significantly degrade the quality of experience. Addressing this latency concern, \cite{tang2021optimization} proposed a cell division optimization method for LEO BH satellite communication systems to minimize packet queueing delay. The aforementioned studies have established BH as a viable solution to provide broad coverage with limited hardware resources. However, existing literature has not addressed the uplink throughput bottlenecks stemming from the fundamental limitations of BH's time-domain beam-switching mechanisms. Furthermore, existing BH research has primarily focused on narrowband systems, limiting its extension to wideband systems. In wideband systems, the beam-squint effect induces frequency-dependent variations in beam directionality, potentially degrading the performance of conventional BH systems that rely on the narrowband assumption.

\subsection{Motivations and Contributions}
\tcbb{In this paper, we aim to overcome the fundamental origin of the uplink throughput bottleneck in conventional BH-based LEO SATCOM by \emph{embracing} the beam-squint effects rather than mitigating them as in existing literature \cite{gao2021wideband,chen2020hybrid,you2022beam, ma2023beam, ma2025switch}. To achieve this goal, we propose algorithms for both analog and digital domain optimization. First, we put forth a joint phase-time array (JPTA) based analog BF optimization algorithm that exploits beam squint to maximize the number of simultaneously serviceable users. Specifically, the proposed approach intentionally steers the frequency-dependent beams at each subcarrier toward spatially distributed directions across the coverage area, referred to as \emph{3D rainbow BF}. This design enables simultaneous service of the entire coverage region within a single time slot using only a single RF chain. By doing so, the limited uplink power utilization, which is the origin of the uplink throughput bottleneck, can be eliminated. Next, we propose a joint subcarrier and power allocation (JSPA) algorithm for the proposed rainbow BF-empowered LEO SATCOM system with extremely short channel coherence time. The JSPA algorithm leverages statistical and geometric channel state information (CSI) to optimize resource allocation. The key contributions of this work are summarized as follows:}
\begin{itemize}
   \item We propose a 3D rainbow BF design framework to overcome the uplink throughput bottleneck in LEO SATCOM. First, we establish the conditions for achieving full BF gain across all subcarriers while ensuring accurate beam steering based on a given frequency-direction mapping, which determines how each subcarrier frequency's beam is aligned with its intended direction. We then mathematically prove that, in general, no combination of TTD and PS values can fully satisfy these conditions. Therefore, to achieve the best possible frequency-direction mapping, we formulate an optimization problem that minimizes the deviation between the JPTA beamformer and the desired rainbow beamformer.
   %The design criterion for frequency-direction mapping is established, followed by a mathematical proof demonstrating the impossibility of achieving arbitrary frequency-direction mapping without beam-gain degradation and/or beam-direction mismatching.
   \item To solve the non-convex rainbow BF optimization problem with coupled antenna elements, we develop a joint alternating and decomposition-based optimization algorithm. \tcb{First, we derive closed-form optimal phase rotation coefficients for given TTD and PS values, considering the interdependence of phase rotation across all antenna elements. Second, we derive closed-form optimal PS values for given TTD and phase rotation coefficients. Third, we decompose the original high-dimensional problem into antenna element-wise 1D line search problems, which we solve to efficiently determine the optimal TTD values for given phase rotation coefficients. Finally, by iteratively updating the optimization variables, our algorithm ensures convergence to near-optimal solutions.}
   
   \item \tcb{To address the challenge of acquiring instantaneous CSI due to the extremely short channel coherence time in LEO SATCOM, we propose a JSPA algorithm leveraging statistical and geometric CSI.} We formulate the throughput maximization problem using the approximated rate expression. To address the optimization problem, we derive the necessary optimality conditions for subcarrier and power allocation by mathematically manipulating Karush-Kuhn-Tucker (KKT) conditions. The intertwined optimality conditions for subcarrier and power allocation are efficiently handled through a greedy approach, ensuring robust satellite resource management.
   
   \item Through numerical evaluations incorporating realistic 3D LEO SATCOM geometries, we demonstrate that the proposed rainbow BF system outperforms baseline schemes, specifically BH and beam sharing \cite{he2022physical}. \tcb{Compared to conventional BH, the widely-used and state-of-the-art scheme, our proposed rainbow BF system demonstrates the capability to simultaneously serve up to \num{12.1} times more users and \num{2.8} times higher uplink throughput. Notably, the performance gap between rainbow BF and conventional BH widens with increasing user density and bandwidth, showcasing the \emph{scalability} of rainbow BF. In addition, we show that the proposed JSPA algorithm achieves near-optimal performance. Moreover, in LOS-dominant scenarios, the performance is close to the ideal case with perfect channel information, despite only relying on partial CSI.}
\end{itemize}

\subsection{Notations}
Herein, standard letters, lower-case boldface letters, and upper-case boldface letters indicate scalars, vectors, and matrices, respectively. The imaginary unit is defined as $j\triangleq\sqrt{-1}$. Notations $(\cdot)^{\sf{T}}$, $(\cdot)^{\sf{H}}$, $\Vert\cdot\Vert$ identify the transpose, conjugate transpose, and $\ell_2$-norm, respectively. $\mathrm{vec}(\cdot)$ denotes a column-wise vectorization operator. $\mathbb{E}[\cdot]$ indicates the statistical expectation. $\mathcal{CN}(\mu,\sigma^2)$ \tcb{represents} a complex Gaussian random variable with a mean of $\mu$ and a variance $\sigma^2$. $\angle(\cdot)$ is phase of complex number; $[x]^{+}\triangleq \max\{x,0\}$ for any $x\in\mathbb{R}$. $\otimes$ denotes the Kronecker product operator. $\Re(\cdot)$ denotes the real part of a complex number.
\section{System Model}
\begin{figure}[!t]
\centering
\includegraphics[width=1\linewidth]{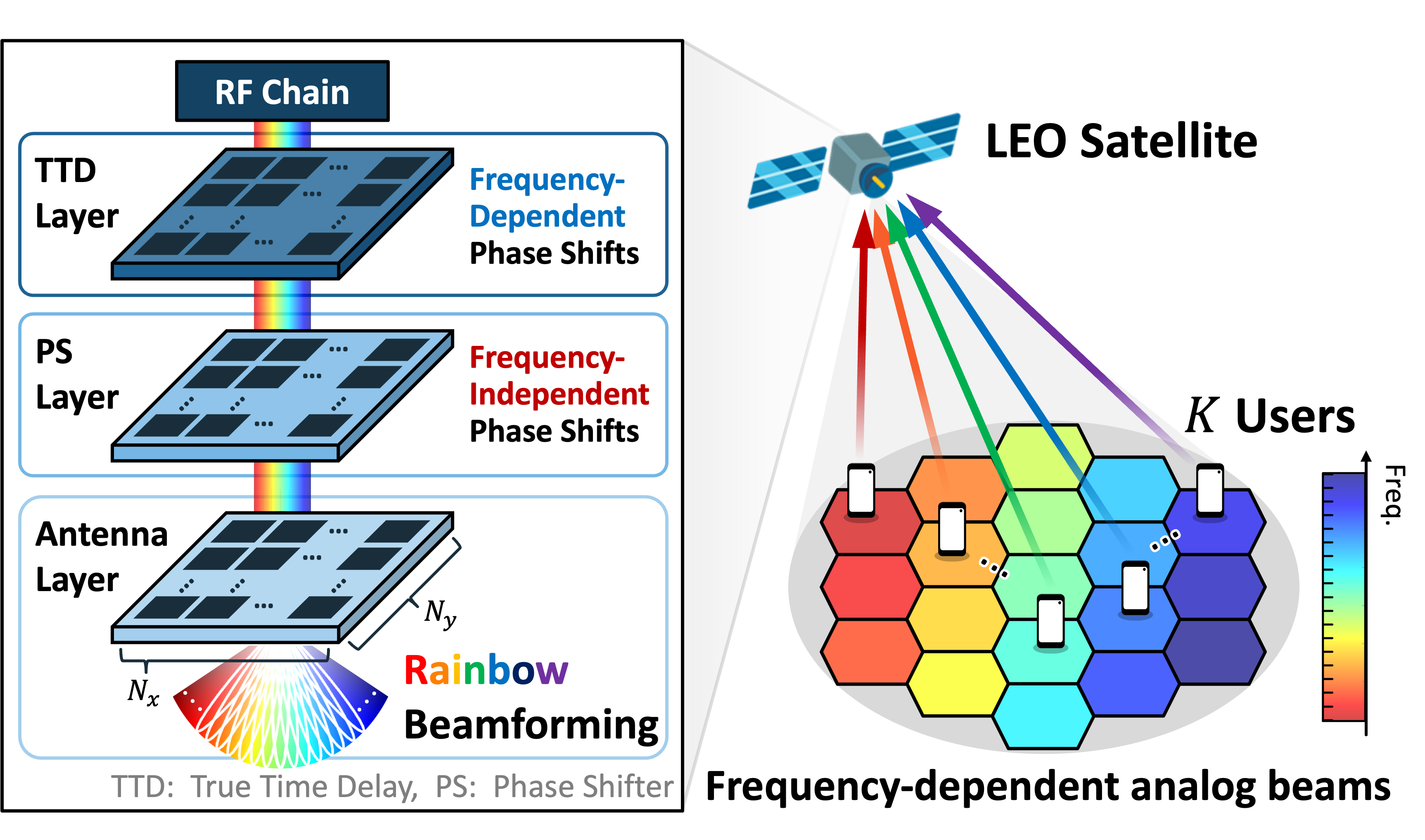}
\caption{System model of the proposed rainbow BF-empowered LEO SATCOM systems.}
\label{block_diagram}
\end{figure}
We consider an uplink orthogonal frequency division multiple access (OFDMA) LEO SATCOM system. Each subcarrier is exclusively allocated to a single user. The center frequency is denoted as $f_{\sf c}$, and the $m$-th subcarrier frequency is $f_m\triangleq f_{\sf c}+(m-\frac{M+1}{2})\Delta f$ for all $m\in\{1,\cdots,M\}$, where $\Delta f$ is the subcarrier spacing. The satellite employs a JPTA antenna with uniform rectangular array (URA) geometry comprising $N_{\sf rx}=N_x\times N_y$ elements, where $N_x$ and $N_y$ denote the number of antenna elements in the $x$ and $y$ directions, respectively, as illustrated in Fig. 2. Each antenna element is individually connected to a dedicated TTD and PS in a one-to-one manner. The satellite operates with a single RF chain. \tcb{On the Earth's surface, $K$ single-antenna users are distributed across the satellite's wide coverage area, each constrained by an instantaneous power budget of $P_k$ due to limited power amplifier's operating range and RF exposure regulations.}
\subsection{3D Angular Representation in UV-plane}
\begin{figure}[!t]
\centering
\includegraphics[width=0.85\linewidth]{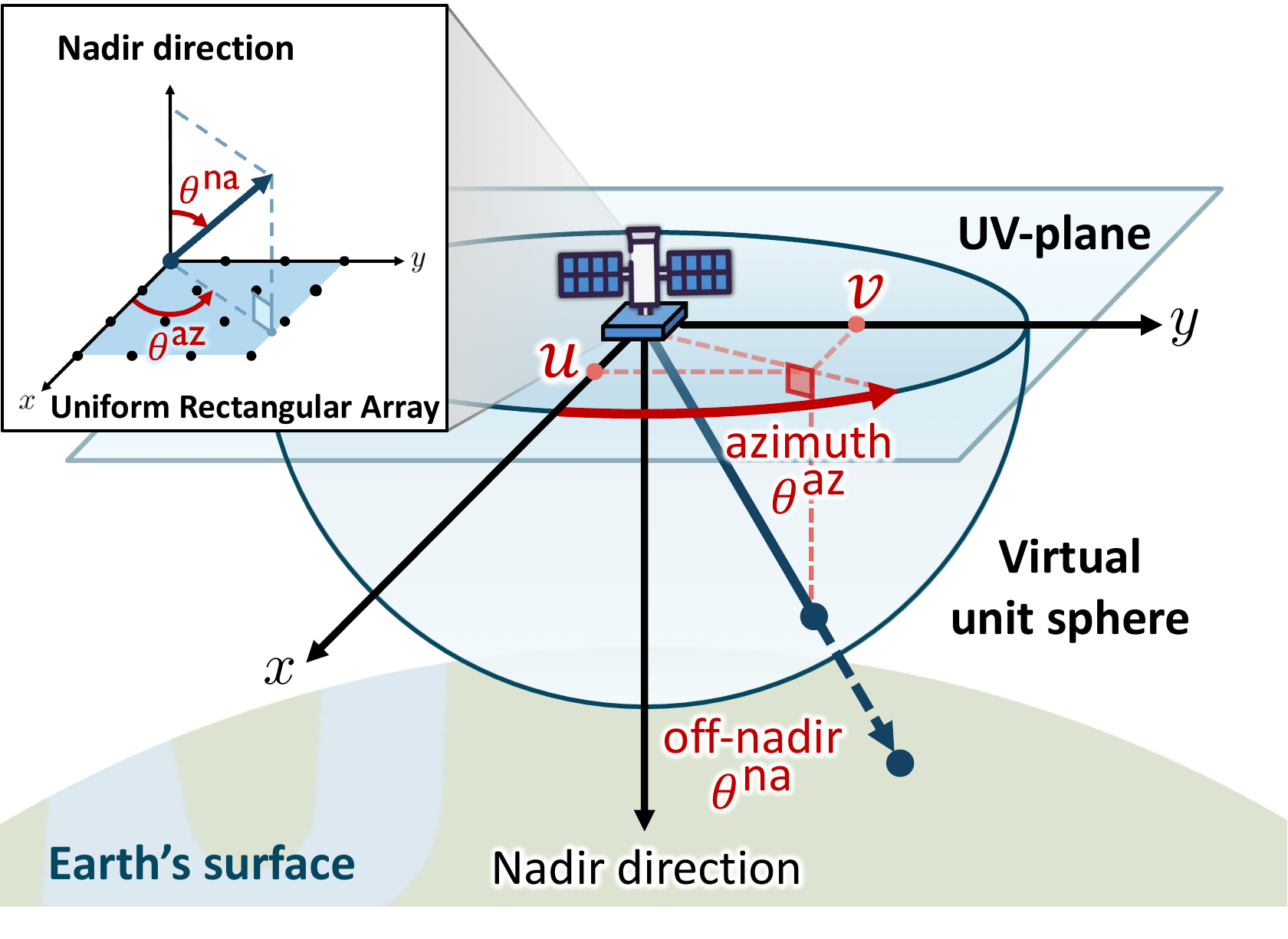}
\caption{Visualization of the relationship between azimuth/off-nadir angles and their UV-plane angular representations.}
\label{Array geometry}
\end{figure}
In this work, we employ UV-plane representations rather than spherical coordinates to design and analyze 3D BF more efficiently. As shown in Fig. \ref{Array geometry}, the UV-plane representation maps spherical coordinate angular representation (off-nadir angle $\theta^{\sf na}$ and azimuth angle $\theta^{\sf az}$) to virtual coordinates $(u,v)$. The mapping is defined as $u\triangleq\sin\theta^{\sf na}\cos\theta^{\sf az}$ and $v\triangleq\sin\theta^{\sf na}\sin\theta^{\sf az}$. This spherical-to-UV mapping can be interpreted as the projection onto the UV-plane of the intersection point between a line (determined by azimuth and off-nadir angles) and a virtual unit sphere centered at the satellite. Thus, a one-to-one correspondence holds between points on the Earth's surface and UV-plane coordinates $(u,v)$ as shown in Fig. \ref{Array geometry}. \tcb{Note that $u\in[-1,1]$, $v\in[-1,1]$, and $u^2+v^2\leq1$ must be satisfied by definition.}
\subsection{Uplink Channel Model}
\tcb{We adopt the widely used LEO SATCOM channel model \cite{you2020massive,you2022beam,10680149}. We assume that the subcarrier spacing is sufficiently narrow to ensure flat fading for each subcarrier and that Doppler shift can be compensated. Defining $(u_k,v_k)$ as the UV-plane representation of the angle-of-arrival (AOA) for the $k$-th user and assuming that AOA of each multipath are approximately the same \cite{you2020massive,you2022beam,10680149}, the channel between the $k$-th user and the satellite at the $m$-th subcarrier is given by}
\begin{align}
&\mathbf{h}_k^{(m)}=g_k^{(m)}\mathbf{a}^{(m)}(u_k,v_k).
\end{align}
Here, $g_k^{(m)}\in\mathbb{C}$ is a complex channel gain; $\mathbf{a}^{(m)}(u_k,v_k)$ is a frequency-dependent array response vector, given by
 \begin{align}
 \mathbf{a}^{(m)}(u_k,v_k) = \mathbf{a}_x^{(m)}(u_k)\otimes\mathbf{a}^{(m)}_y(v_k),
 \end{align}
 where $\mathbf{a}_x^{(m)}(u_k)$ and $\mathbf{a}_y^{(m)}(v_k)$ are
\begin{align}
&\mathbf{a}^{(m)}_x\left(u_k\right)=\begin{bmatrix}
1,e^{-j\pi\frac{f_m}{f_{\sf c}}u_k},
\cdots,  e^{-j(N_x-1)\pi\frac{f_m}{f_{\sf c}}u_k}
\end{bmatrix}^{\sf T},
\nonumber
\\
&\mathbf{a}_y^{(m)}\left(v_k\right)=\begin{bmatrix}
1,e^{-j\pi\frac{f_m}{f_{\sf c}}v_k},
\cdots,  e^{-j(N_y-1)\pi\frac{f_m}{f_{\sf c}}v_k}
\end{bmatrix}^{\sf T}.
\end{align} \tcb{This frequency-dependent array response vector causes beam-squint effects when using conventional PA with PSs. Since the phase shift provided by PS is frequency-independent, PS cannot fully compensate for the frequency-dependent phase shifts of the array response $\mathbf{a}^{(m)}(u_k,v_k)$. This mismatch manifests as the beam-squint effects, where beams at different frequency components point toward different spatial directions.} We model $g_k^{(m)}$ as a complex random variable following a Rician distribution, assuming the existence of a LOS path. Specifically, the real and imaginary parts of $g_k^{(m)}$ are independently and identically distributed, following $\mathcal{N}\left(\sqrt{\frac{\kappa_k\eta_k^{(m)}}{2(\kappa_k+1)}},\frac{\eta_k^{(m)}}{2(\kappa_k+1)}\right)$ \cite{you2020massive}. Here, $\kappa_k$ and $\eta_k^{(m)}$ represent the Rician factor and the average channel power, i.e., $\eta_k^{(m)}\triangleq\mathbb{E}\big[|g_k^{(m)}|^2\big]$.

\subsection{Uplink Signal Model and Throughput}
Our signal model incorporates multiple time slots to analyze the time-averaged performance. We consider a total of $L$ time slots, where $\ell\in\{1,\ldots,L\}$ denotes the time slot index. In the baseband model, the transmit signal of the $k$-th user on the $m$-th subcarrier during the $\ell$-th time slot is expressed as
\begin{equation}
x_k^{(m,\ell)}=b_k^{(m,\ell)}\sqrt{p_k^{(m,\ell)}}s_k^{(m,\ell)}.
\end{equation}
Here, $b_k^{(m,\ell)}\in\{0,1\}$ is the binary subcarrier allocation index, which is 1 if the subcarrier $m$ is allocated to the user $k$ at time slot $\ell$, and 0 otherwise; $p_k^{(m,\ell)}$ is the power allocation factor; $s_k^{(m,\ell)}$ is the unit-power symbol, i.e., $\mathbb{E}[|s_k^{(m,\ell)}|^2]=1$. 

At the transmitter, i.e., ground user, \tcb{the baseband symbols} $\big\{x_k^{(1,\ell)},x_k^{(2,\ell)},\cdots,x_k^{(M,\ell)}\big\}$ are multiplexed and converted to a passband OFDM signal $\Re\big\{\sum_{m=1}^M x_k^{(m,\ell)}e^{j2\pi f_mt}\big\}$. Consequently, this signal propagates through the channel and passes through the satellite's JPTA beamformer. Let $\Re\big\{\sum_{m=1}^M r_k^{(m,\ell)}e^{j2\pi f_mt}\big\}$ denote the received passband signal at the $(n_x,n_y)$-th satellite antenna element, where $r_k^{(m,\ell)}$ incorporates channel fading effect. This signal first passes through the $(n_x,n_y)$-th PS, where its phase is shifted by $\phi^{(n_x,n_y,\ell)}$, resulting in 
\begin{align}
\Re\left\{\sum_{m=1}^M r_k^{(m,\ell)}e^{j\phi^{(n_x,n_y,\ell)}}e^{j2\pi{f_m}t}\right\}.
\end{align}
Subsequently, this signal passes through the $(n_x,n_y)$-th TTD, introducing a delay of $\tau^{(n_x,n_y,\ell)}$, which yields 
\begin{align}
&\Re\left\{\sum_{m=1}^M r_k^{(m,\ell)}e^{j\phi^{(n_x,n_y,\ell)}}e^{j2\pi{f_m}\left(t-\tau^{(n_x,n_y,\ell)}\right)}\right\}
\nonumber\\
&=\Re\left\{\sum_{m=1}^M r_k^{(m,\ell)}e^{j\left\{\phi^{(n_x,n_y,\ell)}-2\pi f_m\tau^{(n_x,n_y,\ell)}\right\}}e^{j2\pi f_m t}\right\}.
\end{align}
After analog-to-digital sampling and fast Fourier transform (FFT) operation, the received baseband symbol for subcarrier $m$ is $
r_k^{(m,\ell)}e^{j\left\{\phi^{(n_x,n_y,\ell)}-2\pi f_m\tau^{(n_x,n_y,\ell)}\right\}}$. \tcb{Note that although PS and TTD values $\{\phi^{(n_x,n_y,\ell)},\tau^{(n_x,n_y,\ell)}\}$ cannot be independently configured for each subcarrier frequency $f_m$, their combined effect inherently varies across frequencies, resulting in a frequency-dependent phase shift. While TTD alone yields a frequency-dependent phase shift, adding PS introduces an additional degree of freedom, enabling a more flexible frequency-dependent BF control.}

\tcb{We define the frequency-dependent JPTA BF matrix as $\mathbf{W}^{(m)}(\mathbf{T}^{(\ell)},\boldsymbol{\Phi}^{(\ell)})\in\mathbb{C}^{N_x\times N_y}$, which is a function of \tcr{the} time delay matrix $\mathbf{T}^{(\ell)}$ and phase shift matrix $\boldsymbol{\Phi}^{(\ell)}$ at time slot $\ell$. Here, the $(n_x,n_y)$-th elements of $\mathbf{T}^{(\ell)}\in\mathbb{R}^{N_x \times N_y}$ and $\boldsymbol{\Phi}^{(\ell)}$ are $\tau^{(n_x,n_y,\ell)}$ and $\phi^{(n_x,n_y,\ell)}$, which represent the delay of the TTD and the phase shift of the PS corresponding to the $(n_x,n_y)$-th element in the $\ell$-th time slot, respectively. Consequently, the $(n_x,n_y)$-th element of $\mathbf{W}^{(m)}(\mathbf{T}^{(\ell)},\boldsymbol{\Phi}^{(\ell)})$ represents the phase shift corresponding to the $(n_x,n_y)$-th antenna element, given by $e^{j\left\{\phi^{(n_x,n_y,\ell)}-2\pi f_m\tau^{(n_x,n_y,\ell)}\right\}}$. For notational brevity, we define \tcr{a} vectorized version of \tcr{the} JPTA BF matrix as $\mathbf{w}^{(m)}\big(\mathbf{T}^{(\ell)},\boldsymbol{\Phi}^{(\ell)}\big)\triangleq{\sf vec}\big(\mathbf{W}^{(m)}(\mathbf{T}^{(\ell)},\boldsymbol{\Phi}^{(\ell)}\big)^{\sf T}\big)$.} Then, the satellite's received signal after BF can be expressed as
\begin{equation}
y^{(m,\ell)}=\mathbf{w}^{(m)}\big(\mathbf{T}^{(\ell)},\boldsymbol{\Phi}^{(\ell)}\big)^{\sf H}\left\{\sum_{k=1}^K \mathbf{h}_k^{(m)} x_k^{(m,\ell)}+\mathbf{z}^{(m,\ell)}\right\},
\end{equation} 
where $\mathbf{z}^{(m,\ell)} \sim \mathcal{CN}(\mathbf{0},\sigma^2\mathbf{I}_{N_{\sf rx}})$ is complex Gaussian noise. Here, we assume exclusive subcarrier allocation, i.e., $b_k^{(m,\ell)}\in\{0,1\}$ and $\sum_{k=1}^K b_k^{(m,\ell)}=1$; thus, there is no multi-user interference.
\tcb{By assuming that the channel remains invariant during $L$ time slots, the time-averaged sum throughput can be expressed as follows:}
\begin{align}
&\frac{1}{L}\sum_{\ell=1}^L\sum_{m=1}^M\sum_{k=1}^K b_k^{(m,\ell)}\Delta f
\nonumber\\
&\quad\quad\quad\times\mathrm{log}_2\left(1+\frac{p_k^{(m,\ell)}\Big|\mathbf{w}^{(m)}\big(\mathbf{T}^{(\ell)},\boldsymbol{\Phi}^{(\ell)}\big)^{\sf H}\mathbf{h}_k^{(m)}\Big|^2}{N_{\sf rx}\sigma^2}\right).\label{Throughput model}
\end{align}
\tcb{In the following sections, we present algorithms for rainbow BF design and resource allocation optimization. Specifically, Sec. III proposes a rainbow beamformer design algorithm that maximizes the number of simultaneously served users, thereby increasing the total active uplink power budget. Subsequently, Sec. IV develops subcarrier and power allocation algorithms based on the pre-designed rainbow beamformer, taking into account the challenge of obtaining instantaneous CSI. As a result, the time-averaged sum throughput in Eq. \eqref{Throughput model} can be significantly improved compared to conventional BH systems. Throughout Sec. III and IV, we focus on a specific time slot and thus omit the time slot index $\ell$.}
\section{3D Rainbow Beamforming Design}
\begin{figure}[t]
\centering
\includegraphics[width=1\linewidth]{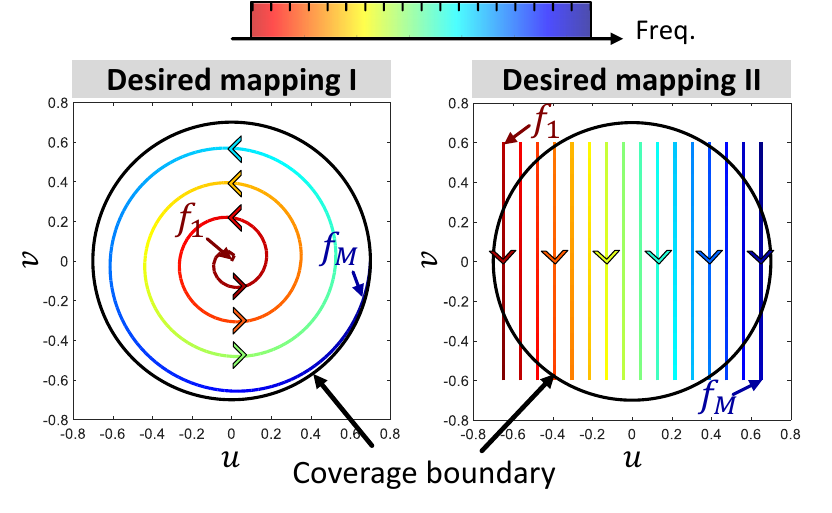}
\caption{Examples of $\{\mathbf{u}_{\sf des}, \mathbf{v}_{\sf des}\}$ for frequency-direction mapping, distributing beams at different subcarrier frequencies across the coverage area.}
\label{Rainbow BF patterns}
%\vspace{-3mm}
\end{figure}
In this section, we propose a design and optimization framework for a 3D rainbow BF for our system. The goal is to design a 3D rainbow BF that maximizes the number of simultaneously serviceable users by generating multiple frequency-dependent beams that collectively span the entire service coverage. A key aspect of this design lies in frequency-direction mapping, which determines how each subcarrier frequency's beam is aligned with its intended direction to ensure coverage and user connectivity. Specifically, we define the desired frequency-direction mapping in the vector form as $\mathbf{u}_{\sf des}\triangleq\big[u^{(1)}_{\sf des},\cdots,u^{(M)}_{\sf des}\big]^{\sf T}$ and $\mathbf{v}_{\sf des}\triangleq\big[v^{(1)}_{\sf des},\cdots,v^{(M)}_{\sf des}\big]^{\sf T}$, where $\big(u_{\sf des}^{(m)},v_{\sf des}^{(m)}\big)$ denotes the desired beam direction of subcarrier $m$. In the following, we refer to $\{\mathbf{u}_{\sf des},\mathbf{v}_{\sf des}\}$ as the desired frequency-direction mapping. We first need to design $\{\mathbf{u}_{\sf des},\mathbf{v}_{\sf des}\}$ that is capable of achieving our goal. As a representative example in Fig. \ref{Rainbow BF patterns}, desired mapping I fills the coverage area in a spiral shape as the subcarrier frequency increases, while desired mapping II fills the coverage area with multiple lines. While numerous mapping strategies exist, we first present the optimization framework assuming a pre-designed $\{\mathbf{u}_{\sf des},\mathbf{v}_{\sf des}\}$. The design criteria for these mappings will be discussed in Section III. C. 

\tcb{The 3D rainbow BF can be achieved by optimizing the combination of the frequency-flat phase shifts of PSs and the frequency-linear phase shifts of TTDs, i.e., $\mathbf{T}$ and $\boldsymbol{\Phi}$.} \tcb{A fundamental consideration in this context is the feasibility of achieving the desired frequency-direction mapping while preserving \tcr{the} BF gain.} The \tcb{BF} gain at subcarrier $m$ toward direction $(u,v)$ is expressed as $\big|\mathbf{w}^{(m)}(\mathbf{T},\boldsymbol{\Phi})^{\sf H}\mathbf{a}^{(m)}(u,v)\big|^2$. To achieve maximum BF gain in the desired direction across all subcarriers, the following conditions must be satisfied:
\begin{align}
\mathbf{w}^{(m)}(\mathbf{T},\boldsymbol{\Phi})=\alpha^{(m)}\mathbf{a}^{(m)}\left(u_{\sf des}^{(m)},v_{\sf des}^{(m)}\right),\forall m.
\label{condition_Fullgain}
\end{align}
Here, $\alpha^{(m)}$ is an arbitrary unit-modulus phase rotation coefficient, i.e., $|\alpha^{(m)}|=1$, which does not affect the beam direction and the beam gain.
In the following proposition, we prove it is impossible to achieve full BF gain across all subcarriers while simultaneously steering them following arbitrary desired frequency-direction mapping using a JPTA beamformer when the number of subcarriers is three or more. 
\begin{prop}
{\it For an arbitrary desired frequency-direction mapping $\{\mathbf{u}_{\sf des},\mathbf{v}_{\sf des}\}$, a solution $\{\mathbf{T}^\star,\boldsymbol{\Phi}^\star\}$ that satisfies Eq. \eqref{condition_Fullgain} generally does not exist when the number of subcarriers is three or more ($M\geq3$).}
\end{prop}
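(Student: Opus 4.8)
The plan is to reduce the vector condition \eqref{condition_Fullgain} to a family of scalar phase equations and then show that, because the desired direction $\big(u_{\sf des}^{(m)},v_{\sf des}^{(m)}\big)$ is allowed to vary arbitrarily with $m$, these equations become over-determined as soon as $M\ge 3$. Comparing the phases of the $(n_x,n_y)$-th entry on both sides of \eqref{condition_Fullgain}, and writing $\psi^{(m)}\triangleq\angle\alpha^{(m)}$, I would first rewrite \eqref{condition_Fullgain} as the equivalent scalar system
\begin{equation}
\begin{aligned}
&\phi^{(n_x,n_y)} - 2\pi f_m \tau^{(n_x,n_y)} \\
&\quad\equiv \psi^{(m)} - (n_x-1)\pi\tfrac{f_m}{f_{\sf c}} u_{\sf des}^{(m)} - (n_y-1)\pi\tfrac{f_m}{f_{\sf c}} v_{\sf des}^{(m)} \pmod{2\pi}
\end{aligned}
\end{equation}
for all $m$ and all $(n_x,n_y)$. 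The crucial structural observation is that, for each fixed element, the left-hand side is an \emph{affine} function of $f_m$ carrying only two real degrees of freedom, $\phi^{(n_x,n_y)}$ and $\tau^{(n_x,n_y)}$, whereas the right-hand side is not affine in $f_m$ since $u_{\sf des}^{(m)}$ and $v_{\sf des}^{(m)}$ are arbitrary.

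Next I would eliminate the per-subcarrier common phase $\psi^{(m)}$, which contributes $M$ free parameters but is shared by every element. Assuming without loss of generality $N_x\ge 2$ (otherwise one repeats the argument along the $y$-axis using $N_y\ge 2$), I would subtract the reference equation at $(1,1)$ from the equation at $(2,1)$. Since $\psi^{(m)}$ (equivalently $\alpha^{(m)}$) cancels in this difference, this produces the necessary condition
\begin{equation}
\Delta\phi - 2\pi f_m \Delta\tau \equiv -\pi\tfrac{f_m}{f_{\sf c}} u_{\sf des}^{(m)} \pmod{2\pi}, \quad \forall m,
\end{equation}
with $\Delta\phi\triangleq\phi^{(2,1)}-\phi^{(1,1)}$ and $\Delta\tau\triangleq\tau^{(2,1)}-\tau^{(1,1)}$. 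The left-hand side $g(f_m)\triangleq\Delta\phi-2\pi f_m\Delta\tau$ is again affine in $f_m$ and governed by only two scalars.

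The key step is then to show that a single affine function cannot interpolate three arbitrarily prescribed values modulo $2\pi$. Using the equal spacing $f_{m+1}-f_m=\Delta f$, an affine $g$ that interpolates three consecutive abscissae must have vanishing second difference, $g(f_1)-2g(f_2)+g(f_3)=0$. Writing the required values as $\tilde u_m\triangleq-\pi\frac{f_m}{f_{\sf c}}u_{\sf des}^{(m)}$ and absorbing the phase-wrapping ambiguity via integers $k_m\in\mathbb{Z}$ through $g(f_m)=\tilde u_m+2\pi k_m$, the consistency condition reduces to
\begin{equation}
\big(\tilde u_1 - 2\tilde u_2 + \tilde u_3\big)\in 2\pi\mathbb{Z}.
\end{equation}
Because $\tilde u_1-2\tilde u_2+\tilde u_3$ is a fixed real number determined solely by the arbitrary mapping $\{u_{\sf des}^{(1)},u_{\sf des}^{(2)},u_{\sf des}^{(3)}\}$, it lies in $2\pi\mathbb{Z}$ only on a measure-zero set of mappings; for a generic mapping it does not, so no integers $k_m$, and hence no $(\Delta\phi,\Delta\tau)$, can satisfy the requirement. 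For $M>3$ the same contradiction follows by restricting attention to any three subcarriers, which establishes that \eqref{condition_Fullgain} generally admits no solution when $M\ge 3$.

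I expect the main obstacle to be the careful handling of the $2\pi$ phase-wrapping: a naive degrees-of-freedom count (two parameters against three constraints) would be too hasty, because the wrapping offsets $k_m$ formally introduce infinitely many discrete degrees of freedom that could in principle rescue the matching. The heart of the argument is therefore the observation that these offsets enter the second-difference consistency condition only additively through $2\pi\mathbb{Z}$, so they can never cancel a generic non-commensurate residual. A secondary point I would make explicit is the genericity claim itself, namely that the exceptional mappings admitting a solution form a measure-zero set (a countable union of level sets), which is precisely what justifies the qualifier ``generally'' in the statement.
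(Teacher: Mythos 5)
Your proof is correct and follows essentially the same route as the paper's Appendix A: both reduce \eqref{condition_Fullgain} to element-wise phase equations, difference adjacent antenna elements to cancel $\alpha^{(m)}$, observe that the remaining affine-in-$f_m$ expression with two free parameters is over-determined by three subcarriers, track the $2\pi$ wrapping with integer offsets, and conclude that solvability forces the arbitrary mapping onto a countable (measure-zero) exceptional set. The only cosmetic difference is that you exploit the equal subcarrier spacing via a vanishing second difference, whereas the paper parameterizes the third direction $u_{\sf des}^{(s)}$ by an arbitrary real $\epsilon$ and exhibits an explicit counterexample.
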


\begin{proof}
\tcb{We prove by contradiction. Assume a solution $\{\mathbf{T}^\star,\boldsymbol{\Phi}^\star\}$ exists satisfying Eq. \eqref{condition_Fullgain} for all subcarriers and arbitrary desired frequency-direction mapping. For three distinct subcarriers among $\{1,\cdots,M\}$, with mathematical manipulation of the element-wise version of Eq. \eqref{condition_Fullgain}, it can be shown that no such solution $\{\mathbf{T}^\star,\boldsymbol{\Phi}^\star\}$ can satisfy Eq. \eqref{condition_Fullgain} in general, leading to a contradiction. See Appendix A for details.}
\end{proof}
\subsection{Problem Formulation}
As shown in Proposition 1, achieving rainbow BF with full gain across all subcarriers for a given arbitrary frequency-direction mapping $\{\mathbf{u}_{\sf des},\mathbf{v}_{\sf des}\}$ is generally infeasible. Thus, our objective is to achieve the desired frequency-direction mapping approximately. To this end, we formulate a least-squares optimization problem to minimize the error norm between the JPTA beamformer and the desired rainbow beamformer as follows:
\begin{mini!}|l|[2]
{_{\boldsymbol{\alpha},\mathbf{T},\boldsymbol{\Phi}}}
{\hspace{-3 mm}\sum_{m=1}^M\left\lVert\mathbf{w}^{(m)}(\mathbf{T},\boldsymbol{\Phi}) - \alpha^{(m)}\mathbf{a}^{(m)}\left(u_{\sf des}^{(m)},v_{\sf des}^{(m)}\right)\right\rVert^2}
{\label{Rainbow BF Original Problem}}
{}
\addConstraint{\tau^{(n_x,n_y)}\in[0,\tau^{\sf max}], \;\forall n_x, n_y,}\label{TTD range}
\addConstraint{\phi^{(n_x,n_y)}\in[0,2\pi), \;\forall n_x, n_y,}\label{PS range}
\addConstraint{\big|\alpha^{(m)}\big|=1, \;\forall m,}\label{unit-modulus alpha}
\end{mini!}
 where \tcb{$\boldsymbol{\alpha}\triangleq[\alpha^{(1)},\cdots,\alpha^{(M)}]^{\sf T}$;} here, $\alpha^{(m)}$ represents the unit-modulus phase rotation coefficient for subcarrier $m$. Constraint \eqref{TTD range} limits the operating range of TTDs, while \eqref{PS range} accounts for the cyclic nature of phase shifts. Constraint \eqref{unit-modulus alpha} enforces the unit-modulus condition. The objective function is non-convex with respect to $\{\mathbf{T},\boldsymbol{\Phi},\boldsymbol{\alpha}\}$, making efficient optimization challenging. Moreover, for wideband systems with large antenna arrays, the number of optimization variables $(2N_{\sf rx}+M)$ grows significantly, leading to high computational complexity. To address this, we propose a joint alternating and decomposition-based optimization framework. \tcb{First, we derive closed-form optimal phase rotation coefficients $\boldsymbol{\alpha}^{\star}$ for given $\{\mathbf{T},\boldsymbol{\Phi}\}$. Second, we derive closed-form optimal PS values $\boldsymbol{\Phi}^{\star}$ for given $\{\mathbf{T},\boldsymbol{\alpha}\}$. Then, to obtain optimized TTD values $\mathbf{T}^\star$ for given $\boldsymbol{\alpha}$, we decompose the original high-dimensional problem into multiple sub-problems and efficiently solve them by parallel 1D line searches. By iteratively updating the optimization variables, our algorithm outputs an optimized rainbow beamformer.}

\subsection{Alternating and Decomposition-based Optimization}
\tcb{We re-express the objective function and define $F(\boldsymbol{\alpha},\mathbf{T},\boldsymbol{\Phi})$ as follows:
\begin{align}
&\sum_{m=1}^M\left\lVert\mathbf{w}^{(m)}(\mathbf{T},\boldsymbol{\Phi}) - \alpha^{(m)}\mathbf{a}^{(m)}\left(u_{\sf des}^{(m)},v_{\sf des}^{(m)}\right)\right\rVert^2
\nonumber\\
&\overset{(\rm{a})}{=}2MN_{\sf rx}
\nonumber\\
&-2\underbrace{\sum_{m=1}^{M}\Re\left(\left(\alpha^{(m)}\right)^*\mathbf{a}^{(m)}\left(u_{\sf des}^{(m)},v_{\sf des}^{(m)}\right)^{\sf H}\mathbf{w}^{(m)}(\mathbf{T},\boldsymbol{\Phi})\right)}_{\triangleq F(\boldsymbol{\alpha},\mathbf{T},\boldsymbol{\Phi})}.
\label{Definition_NewObejective}
\end{align} In step (a), we use the fact that $\big\lVert\mathbf{w}^{(m)}(\mathbf{T},\boldsymbol{\Phi})\big\rVert^2 = \big\lVert \alpha^{(m)}\mathbf{a}^{(m)}\big(u_{\sf des}^{(m)},v_{\sf des}^{(m)}\big)\big\rVert^2 = N_{\sf rx}$. Since this constant term does not affect the optimization solution, maximizing $F\left(\boldsymbol{\alpha},\mathbf{T},\boldsymbol{\Phi}\right)$ yields the same solutions as the original problem. For given $\{\mathbf{T}, \boldsymbol{\Phi}\}$, $\alpha^{(m)}$ can be any unit-modulus complex number for each subcarrier. Hence, maximizing $F\left(\boldsymbol{\alpha},\mathbf{T},\boldsymbol{\Phi}\right)$ can be decomposed into $M$ independent subproblems as the following subcarrier-wise optimization problem for $m\in\{1,\cdots,M\}$:
\begin{maxi!}|l|[2]
{_{\alpha^{(m)}}}
{\Re\left(\left(\alpha^{(m)}\right)^* \mathbf{a}^{(m)}\left(u_{\sf des}^{(m)},v_{\sf des}^{(m)}\right)^{\sf H}\mathbf{w}^{(m)}(\mathbf{T},\boldsymbol{\Phi})\right)}
{\label{alpha_opt}}
{}
\addConstraint{\big|\alpha^{(m)}\big|=1.}
\end{maxi!}
By defining $z^{(m)}\triangleq\mathbf{a}^{(m)}\left(u_{\sf des}^{(m)},v_{\sf des}^{(m)}\right)^{\sf H}\mathbf{w}^{(m)}(\mathbf{T},\boldsymbol{\Phi})$ for notational brevity, the objective function in problem \eqref{alpha_opt} can be rewritten as
\begin{align}
\Re\left(\left(\alpha^{(m)}\right)^*z^{(m)}\right)&=\Re\left(\left\vert\alpha^{(m)}\right\vert e^{-j\angle\alpha^{(m)}}\left\vert z^{(m)}\right\vert e^{j\angle z^{(m)}} \right)
\nonumber\\
&= \left\vert\alpha^{(m)} z^{(m)}\right\vert\cos(\angle z^{(m)}-\angle\alpha^{(m)}).
\end{align}
Since $\left\vert\alpha^{(m)}\right\vert = 1$ and $\left\vert z^{(m)}\right\vert$ is fixed for given $\{\mathbf{T}, \boldsymbol{\Phi}\}$, the objective function is maximized when $\cos(\angle z^{(m)}-\angle\alpha^{(m)}) = 1$, which occurs when $\angle\alpha^{(m)} = \angle z^{(m)}$. This gives us \tcr{a} closed-form optimal solution:
\begin{align}
\label{optimal alpha}
\alpha^{\star(m)} = \exp\left[j\angle\left(\mathbf{a}^{(m)}\left(u_{\sf des}^{(m)},v_{\sf des}^{(m)}\right)^{\sf H}\mathbf{w}^{(m)}(\mathbf{T},\boldsymbol{\Phi})\right)\right],\,\forall m.
\end{align}}

\tcb{For notational simplicity, we define $\delta_m^{(n_x,n_y)}\triangleq -\angle\alpha^{(m)}+\frac{f_m}{f_{\sf c}}\left((n_x-1)u_{\sf des}^{(m)}+(n_y-1)v_{\sf des}^{(m)}\right)$. Using this definition, $F\left(\boldsymbol{\alpha},\mathbf{T},\boldsymbol{\Phi}\right)$ can be rewritten as follows:
\begin{align}
&F\left(\boldsymbol{\alpha},\mathbf{T},\boldsymbol{\Phi}\right)
\nonumber\\
&=\sum_{n_x=1}^{N_x}\sum_{n_y=1}^{N_y}\Re\left(\sum_{m=1}^{M}e^{j\delta^{(n_x,n_y)}_m}e^{j\phi^{(n_x,n_y)}}e^{-j2\pi f_m\tau^{(n_x,n_y)}}\right)
\nonumber\\
&=\sum_{n_x=1}^{N_x}\sum_{n_y=1}^{N_y}\Re\Bigg(e^{j\phi^{(n_x,n_y)}}\underbrace{\sum_{m=1}^M e^{j\left\{\delta^{(n_x,n_y)}_m-2\pi f_m \tau^{(n_x,n_y)}\right\}}}_{\triangleq S\left(\boldsymbol{\alpha},\tau^{(n_x,n_y)}\right)}\Bigg)
\nonumber\\
&=\sum_{n_x=1}^{N_x}\sum_{n_y=1}^{N_y}\left\vert S\left(\boldsymbol{\alpha},\tau^{(n_x,n_y)}\right)\right\vert
\nonumber\\
&\qquad\qquad\qquad\times\cos\left(\phi^{(n_x,n_y)}+\angle S\left(\boldsymbol{\alpha},\tau^{(n_x,n_y)}\right)\right).
\label{Objective function manipulation}
\end{align}
The optimal phase shift that makes the cosine term equal in the final line of Eq. \eqref{Objective function manipulation} to one is given by
\begin{align}
\phi^{\star(n_x,n_y)}=-\angle S\left(\boldsymbol{\alpha},\tau^{(n_x,n_y)}\right), \forall n_x, n_y.
\label{optimal PS}
\end{align}
Hence, for a given $\boldsymbol{\alpha}$, the optimal phase shifter values $\boldsymbol{\Phi}$ are determined by $\mathbf{T}$. Consequently, our objective now becomes maximizing $\sum_{n_x=1}^{N_x}\sum_{n_y=1}^{N_y}\left\vert S\left(\boldsymbol{\alpha},\tau^{(n_x,n_y)}\right)\right\vert$. Note that $S\left(\boldsymbol{\alpha},\tau^{(n_x,n_y)}\right)$ depends only on the delay $\tau^{(n_x,n_y)}$ for a given $\boldsymbol{\alpha}$. Therefore, we can decompose the optimization problem into antenna element-wise subproblems as follows:}
\tcb{\begin{maxi!}|l|[2]
{_{\tau^{(n_x,n_y)}}}
{\left\vert S\left(\boldsymbol{\alpha},\tau^{(n_x,n_y)}\right)\right\vert}
{\label{[Rainbow BF Design] 1D search sub-problem]}}
{}
\addConstraint{\tau^{(n_x,n_y)}\in[0,\tau^{\sf max}]}.
\end{maxi!}
The solution of sub-problem \eqref{[Rainbow BF Design] 1D search sub-problem]} can be obtained through 1D line searches with respect to $\tau^{(n_x,n_y)}\in[0,\tau^{\sf max}]$ for all $n_x\in\{1,\cdots,N_x\}$ and $n_y\in\{1,\cdots,N_y\}$. Due to the independence of each sub-problem, we can efficiently solve \eqref{[Rainbow BF Design] 1D search sub-problem]} for all antenna elements using off-the-shelf parallel computing software such as \tcr{the} MATLAB Parallel Computing Toolbox \cite{matlabparallel}.} 

\tcb{We summarize the optimization process in Algorithm 1. Specifically, when $\boldsymbol{\alpha}$ is fixed, the optimal TTD and PS values $\{\mathbf{T}^\star,\boldsymbol{\Phi}^\star\}$ can be found by solving problem \eqref{[Rainbow BF Design] 1D search sub-problem]} with a parallel 1D line search and by applying Eq. \eqref{optimal PS}. Subsequently, the optimal $\boldsymbol{\alpha}^\star$ is updated through Eq. \eqref{optimal alpha}. This process is repeated until $F(\boldsymbol{\alpha},\mathbf{T},\boldsymbol{\Phi})$ converges, yielding the optimized rainbow beamformer $\mathbf{w}^{(m)}(\mathbf{T}^\star,\boldsymbol{\Phi}^\star)$.}

\tcb{\begin{remark}{\rm \textbf{(Convergence of Algorithm 1):}}
The convergence of Algorithm 1 to a local optimum is guaranteed by the following two properties: (i) the objective function $F(\boldsymbol{\alpha}, \mathbf{T}, \mathbf{\Phi})$ is bounded below by $MN_{\sf rx}$ as shown in Eq. \eqref{Definition_NewObejective} and (ii) each iteration of the alternating optimization monotonically increases $F(\boldsymbol{\alpha}, \mathbf{T}, \mathbf{\Phi})$, where the alternating optimization alternates between updating $\{\mathbf{T}^\star,\boldsymbol{\Phi}^\star\}$ for a given $\boldsymbol{\alpha}$ and updating $\boldsymbol{\alpha}^\star$ for given $\{\mathbf{T},\boldsymbol{\Phi}\}$ \cite{convergence}.
\end{remark}}

\tcb{\begin{remark}{\rm \textbf{(Computational Complexity of Algorithm 1):}}
The computational complexity of the proposed joint alternating and decomposition-based optimization algorithm (Algorithm 1) \tcr{can be} analyzed as follows. For each antenna element $(n_x, n_y)$, the 1D line search in problem \eqref{[Rainbow BF Design] 1D search sub-problem]} requires evaluating $|S(\boldsymbol{\alpha}, \tau^{(n_x,n_y)})|$ over $G = \lceil\tau_{\max}/\Delta\tau\rceil$ grid points, where $\Delta\tau$ denotes the grid resolution. Since $S(\boldsymbol{\alpha}, \tau^{(n_x,n_y)}) = \sum_{m=1}^{M} e^{j\{\delta_m^{(n_x,n_y)} - 2\pi f_m \tau^{(n_x,n_y)}\}}$, each grid point evaluation has complexity $\mathcal{O}(M)$. Thus, the single antenna element optimization requires $\mathcal{O}(G M)$ \tcr{operations}. Therefore, the overall complexity of Algorithm 1 is $\mathcal{O}(G M I N_{\sf rx})$, where $I$ denotes the number of iterations until convergence. 
\end{remark}}

\begin{algorithm}[!t]
\caption{\tcb{Joint Alternating and Decomposition-based Rainbow BF Optimization Algorithm}}
\begin{algorithmic}[1]
\State \textbf{Input}: \tcb{$\tau^{\sf max}$ and frequency-direction mapping $\{\mathbf{u}_{\sf des},\mathbf{v}_{\sf des}\}$}
\State \textbf{Initialize}: $\boldsymbol{\alpha}$
\While{$F(\boldsymbol{\alpha},\mathbf{T},\boldsymbol{\Phi})$ does not converge}
   \State \tcb{Solve problem \eqref{[Rainbow BF Design] 1D search sub-problem]} and apply Eq. \eqref{optimal PS} for all antenna elements to obtain optimal $\{\mathbf{T}^\star,\boldsymbol{\Phi}^\star\}$}
   \State Update $\boldsymbol{\alpha}^\star$ using Eq. \eqref{optimal alpha}
\EndWhile
\State \textbf{Output}: \tcb{Optimized rainbow beamformer $\mathbf{w}^{(m)}(\mathbf{T}^\star,\boldsymbol{\Phi}^\star)$}
\end{algorithmic}
\end{algorithm}

\subsection{Design for Frequency-Direction Mapping $\{\mathbf{u}_{\sf des},\mathbf{v}_{\sf des}\}$}
\tcb{We aim to design a frequency-direction mapping that maximizes the number of simultaneously serviceable users, thereby boosting the throughput. Since our rainbow BF optimization algorithm can accommodate arbitrary frequency-direction mapping as input parameters, this flexibility of the proposed 3D BF framework enables the deployment of diverse frequency-direction mapping approaches.}
    
\tcb{While countless frequency-direction mappings are possible, we present two categories of frequency-direction mapping design. The first approach is \emph{fixed frequency-direction mapping}. In this approach, as shown in Fig. \ref{Rainbow BF patterns}, beams with different desired beam directions according to frequency collectively cover the entire coverage area simultaneously, regardless of user locations, traffic demands, etc. In this approach, there is no need to update the beamformer as user locations or environmental conditions change, enabling a low-cost implementation with fixed TTD elements that have pre-optimized delay values and cannot be dynamically tunable post-fabrication \cite{yan2022energy}. The second approach is \emph{adaptive frequency-direction mapping}. In this approach, there is room to further improve throughput by utilizing user location information, regional user density information, and traffic demand information, etc., compared to the fixed mapping. However, since TTD need to be dynamically tunable, this leads to increased hardware cost and power consumption \cite{7394105,liu2018rotman,yan2022energy}.}
    
\tcb{Both design approaches must consider \tcr{the} bandwidth, satellite's orbital altitude, coverage, and beam width, etc. In this work, we focus on \tcr{a fixed frequency-direction mapping} for simplicity and leave adaptive mapping as future work. In Sec. V, we numerically assess the impact of different frequency-direction mappings on 3D rainbow BF performance, focusing on the fixed mapping.}

\section{Joint Subcarrier and Power Allocation}
While conventional OFDMA subcarrier and power allocation methods heavily rely on users' instantaneous channel SNR \tcb{information} \cite{kim2005joint}, the short channel coherence time in LEO SATCOM systems leads to rapid channel SNR fluctuations \cite{you2022beam}, thus making these terrestrial network methods ineffective. To address this issue, we present a JSPA algorithm for the proposed rainbow BF-based LEO SATCOM systems that leverage statistical and geometric CSI instead of instantaneous CSI.
\subsection{Achievable Rate Approximation}
\tcb{To eliminate the dependency on instantaneous channel SNR in the achievable rate expression of Eq. \eqref{Throughput model}, we derive an upper bound of ergodic rates \cite{you2022beam} and use it as an approximation of the achievable rate:}
\begin{align}
&\mathbb{E}\Bigg[\mathrm{log}_2\Bigg(1+\frac{p_k^{(m)}\big\lvert\mathbf{w}^{(m)}(\mathbf{T},\boldsymbol{\Phi})^{\sf H}\mathbf{h}_k^{(m)}\big\rvert^2}{N_{\sf rx}\sigma^2}\Bigg)\Bigg]
\nonumber\\
&=\mathbb{E}\Bigg[\mathrm{log}_2\Bigg(1+\frac{p_k^{(m)}\big\lvert g_k^{(m)}\big\rvert^2\big\lvert\mathbf{w}^{(m)}(\mathbf{T},\boldsymbol{\Phi})^{\sf H}\mathbf{a}^{(m)}(u_k,v_k)\big\rvert^2}{N_{\sf rx}\sigma^2}\Bigg)\Bigg]
\nonumber\\
&\overset{(\rm{a})}{\leq}\mathrm{log}_2\Bigg(1+\frac{p_k^{(m)}\mathbb{E}\big[\big\lvert g_k^{(m)}\big\rvert^2\big]\big|\mathbf{w}^{(m)}(\mathbf{T},\boldsymbol{\Phi})^{\sf H}\mathbf{a}^{(m)}(u_k,v_k)\big|^2}{N_{\sf rx}\sigma^2}\Bigg)
\nonumber\\
&=\mathrm{log}_2\Big(1+p_k^{(m)}\gamma_k^{(m)}\Big),
\end{align}
where step ${\rm (a)}$ follows from the concavity of the logarithmic function; average channel SNR $\gamma_k^{(m)}$ is defined as
\begin{align}
\gamma_k^{(m)}\triangleq\frac{\eta_k^{(m)}\big\lvert\mathbf{w}^{(m)}(\mathbf{T},\boldsymbol{\Phi})^{\sf H}\mathbf{a}^{(m)}(u_k,v_k)\big\rvert^2}{N_{\sf rx}\sigma^2}.
\end{align}
As the Rician factor goes to infinity ($\kappa_k \rightarrow \infty$), the channel becomes deterministic, making the approximation increasingly tight. Notably, $\gamma_k^{(m)}$ is expressed in terms of the \tcb{geometric} information (i.e., $(u_k,v_k)$) and statistical information (i.e., $\eta_k^{(m)}$) of the channels. In contrast to instantaneous channel SNR, statistical and \tcb{geometric} information is often readily available or easily estimated thanks to the quasi-deterministic nature of satellite movement along a predetermined orbit around the Earth \cite{roper2022beamspace}.
\subsection{Problem Formulation}
For a given time slot, we formulate the joint subcarrier and power allocation problem to maximize the system throughput as follows:
\begin{maxi!}|l|[2]
{_{\mathbf{B},\mathbf{P}}}
{\sum_{m=1}^M\sum_{k=1}^K b_k^{(m)}\Delta f\log_2\left(1+p_k^{(m)}\gamma_k^{(m)}\right)}
{\label{Original Subcarrier/power allocation}}
{}
\addConstraint{b_k^{(m)}\in\{0,1\},\;\forall k, m,}
\addConstraint{\sum_{k=1}^K b_k^{(m)} = 1,\;\forall m,}
\addConstraint{\sum_{m=1}^M p_k^{(m)} \leq P_k, \;\forall k,}
\addConstraint{p_k^{(m)}\geq 0, \;\forall k,m,}
\end{maxi!}
where $\mathbf{B}\in\mathbb{R}^{K\times M}$ contains subcarrier allocation indices $b_k^{(m)}\in\{0,1\}$; $\mathbf{P}\in\mathbb{R}^{K\times M}$ contains power allocation factor $p_k^{(m)}$ for user $k$ on subcarrier $m$; $P_k$ denotes the instantaneous power budget of user $k$. Note that this problem is non-convex due to the binary subcarrier allocation constraints, making it challenging to solve efficiently. For mathematical tractability, we relax $b_k^{(m)}$ into a continuous variable constrained within the interval [0,1].
With this relaxation, the original problem can be reformulated as:
\begin{maxi!}|l|[2]
{_{\mathbf{B},\mathbf{P}}}
{\sum_{m=1}^M\sum_{k=1}^K b_k^{(m)}\Delta f\log_2\left(1+p_k^{(m)}\gamma_k^{(m)}\right)}
{\label{Relaxed resource allocation problem}}
{}
\addConstraint{\sum_{k=1}^K b_k^{(m)} \leq 1, \;\forall m,}\label{sum of b leq 1}
\addConstraint{\sum_{m=1}^M p_k^{(m)} \leq P_k, \;\forall k,}\label{power budget constraint}
\addConstraint{b_k^{(m)}\geq 0,\,\, p_k^{(m)}\geq 0, \;\forall k,m.}\label{non-negative b, p}
\end{maxi!}
\tcr{Even with this continuous variable relaxation, we will show in the following subsection that this relaxation does not violate our exclusive subcarrier allocation assumption in Proposition 2.}
\subsection{Optimality Conditions for Problem \eqref{Relaxed resource allocation problem}}
We derive the KKT conditions, which are necessary conditions for the optimal solution of relaxed problem \eqref{Relaxed resource allocation problem}. First, we define the Lagrangian function as follows:
\begin{align}
&L(\mathbf{B},\mathbf{P})\triangleq-\sum_{m=1}^M\sum_{k=1}^K b_k^{(m)}\Delta f\log_2\left(1+p_k^{(m)}\gamma_k^{(m)}\right)\nonumber\\
&+\sum_{m=1}^M\lambda^{(m)}\Bigg(\sum_{k=1}^K b_k^{(m)}-1\Bigg)+\sum_{m=1}^M \mu_k\Bigg(\sum_{k=1}^K p_k^{(m)}-P_k\Bigg)
\nonumber\\
&-\sum_{m=1}^M\sum_{k=1}^K \rho_k^{(m)} b_k^{(m)}- \sum_{m=1}^M\sum_{k=1}^K \nu_k^{(m)} p_k^{(m)},
\end{align}
where $\lambda^{(m)}$, $\mu_k$, $\rho_k^{(m)}$, and $\nu_k^{(m)}$ are \tcr{Lagrange} multipliers. Taking the derivatives with respect to $b_k^{(m)}$ and $p_k^{(m)}$:
\begin{align}
&\frac{\partial}{\partial b_k^{(m)}}L(\mathbf{B},\mathbf{P})=-\Delta f \log_2\left(1+p_k^{(m)}\gamma_k^{(m)}\right)+\lambda^{(m)}-\rho_k^{(m)},
\\
&\frac{\partial}{\partial p_k^{(m)}}L(\mathbf{B},\mathbf{P})=-\frac{\Delta f \gamma_k^{(m)} b_k^{(m)}}{\ln2\left(1+p_k^{(m)}\gamma_k^{(m)}\right)}+\mu_k-\nu_k^{(m)}.
\end{align}
We can obtain the stationary conditions by setting $\frac{\partial}{\partial b_k^{(m)}}L(\mathbf{B},\mathbf{P})=0$ and $\frac{\partial}{\partial p_k^{(m)}}L(\mathbf{B},\mathbf{P})=0$ for all $k$ and $m$. The KKT conditions are listed as follows:
\begin{itemize}
    \item Stationary:
    \begin{align}
        &-\Delta f \log_2\left(1+p_k^{(m)}\gamma_k^{(m)}\right)+\lambda^{(m)} = \rho_k^{(m)}, \;\forall k,m,\label{stationary1} \\
        &-\frac{\Delta f \gamma_k^{(m)} b_k^{(m)}}{\ln2\left(1+p_k^{(m)}\gamma_k^{(m)}\right)}+\mu_k=\nu_k^{(m)}, \;\forall k,m,\label{stationary2}
    \end{align}

    \item Complementary slackness:
    \begin{align}
    &\lambda^{(m)}\Bigg(\sum_{k=1}^K b_k^{(m)}-1\Bigg)=0,\;\forall m,\label{complementary1}
    \\
    &\mu_k\Bigg(\sum_{m=1}^M p_k^{(m)}-P_k\Bigg)=0,\;\forall k,\label{complementary2}
    \\
    &\rho_k^{(m)} b_k^{(m)}=0, \;\forall k,m,\label{complementary3}
    \\
    &\nu_k^{(m)} p_k^{(m)} =0,\;\forall k,m,\label{complementary4}
    \end{align}
    
    \item Primal feasibility: \eqref{sum of b leq 1}, \eqref{power budget constraint}, \eqref{non-negative b, p},
    
    \item Dual feasibility:
    \begin{align}
    &\lambda^{(m)}\geq0,\;\forall m,\label{dual1}
    \\
    &\mu_k\geq0,\;\forall k,\label{dual2}
    \\
    &\rho_k^{(m)}\geq0,\;\forall k,m,\label{dual3}
    \\
    &\nu_k^{(m)}\geq0,\;\forall k,m.\label{dual4}
    \end{align}
\end{itemize}
We proceed with the mathematical manipulation of these KKT conditions by combining the intertwined conditions.
\begin{itemize}
    \item From Eqs. \eqref{stationary1} and \eqref{complementary3}:    
    \begin{align}
        \hspace{-4mm}b_k^{(m)}\left\{\Delta f \log_2\left(1+p_k^{(m)}\gamma_k^{(m)}\right)-\lambda^{(m)}\right\}=0,\;\forall k,m, \label{combined1}
    \end{align}
    
    \item From Eqs. \eqref{stationary1} and \eqref{dual3}:
        \begin{align}
            \Delta f \log_2\left(1+p_k^{(m)}\gamma_k^{(m)}\right)-\lambda^{(m)}\leq0,\;\forall k,m, \label{combined2}
        \end{align}

    \item From Eqs. \eqref{stationary2} and \eqref{complementary4}:
    \begin{align}
         p_k^{(m)}\left\{\frac{\Delta f \gamma_k^{(m)} b_k^{(m)}}{\ln2\left(1+p_k^{(m)}\gamma_k^{(m)}\right)}-\mu_k\right\}=0, \;\forall k,m, \label{combined3}
    \end{align}
    
    \item From Eqs. \eqref{stationary2} and \eqref{dual4}:
    \begin{align}
         \frac{\Delta f \gamma_k^{(m)} b_k^{(m)}}{\ln2\left(1+p_k^{(m)}\gamma_k^{(m)}\right)}-\mu_k\leq0, \;\forall k,m, \label{combined4}
    \end{align}

\end{itemize}

\tcb{From Eqs. \eqref{combined1} and \eqref{combined2}, we derive optimality conditions for subcarrier allocation by analyzing two distinct cases. When subcarrier $m$ is not allocated to user $k$ (i.e., $b_k^{(m)}=0$), Eq. \eqref{combined1} is automatically satisfied. Then, $\Delta f \log_2\big(1+p_k^{(m)}\gamma_k^{(m)}\big)\leq\lambda^{(m)}$ holds from Eq. \eqref{combined2}. Conversely, when subcarrier $m$ is allocated to user $k$ (i.e., $b_k^{(m)}>0$), Eq. \eqref{combined1} requires $\Delta f \log_2\big(1+p_k^{(m)}\gamma_k^{(m)}\big)=\lambda^{(m)}$. These two conditions hold for all $k$ when $\lambda^{(m)}=\underset{k}{\mathrm{max}}\,\Delta f\log_2\left(1+p_k^{(m)}\gamma_k^{(m)}\right)$; hence, the \tcr{Lagrange} multiplier $\lambda^{(m)}$ represents the maximum rate for subcarrier $m$. Therefore, the $m$-th subcarrier is allocated to user $k^{\star(m)}$ to achieve this maximum rate $\lambda^{(m)}$, such that
\begin{align}
k^{\star(m)}=\underset{k}{\mathrm{argmax}}\,\Delta f\log_2\left(1+p_k^{(m)}\gamma_k^{(m)}\right)=\underset{k}{\mathrm{argmax}}\,p_k^{(m)}\gamma_k^{(m)}.
\label{Optimality condition 1}
\end{align}} As mentioned earlier, $b_k^{(m)}$ is relaxed to a continuous real variable for mathematical tractability. Therefore, there is no guarantee that $k^{\star(m)}$ satisfying \eqref{Optimality condition 1} is unique for a given subcarrier $m$. \tcb{However, the following proposition shows that the optimal value of our relaxed problem is preserved even if we allocate each subcarrier to only one user satisfying \eqref{Optimality condition 1}.}
\begin{prop}\tcb{Let $\mathcal{K}_m$ denote the set of users satisfying \eqref{Optimality condition 1} for subcarrier $m$. Although we derive the optimal conditions by relaxing $b_k^{(m)}$ to continuous values in problem \eqref{Relaxed resource allocation problem}, enforcing the original binary constraint $b_k^{(m)}\in\{0,1\}$ with exclusive subcarrier allocation does not compromise the optimal value \cite{kim2005joint}.}
\end{prop}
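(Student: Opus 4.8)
The plan is to show that the relaxed problem \eqref{Relaxed resource allocation problem} always admits an optimal solution in which each subcarrier is assigned to exactly one user, so that re-imposing the binary constraint $b_k^{(m)}\in\{0,1\}$ together with exclusive allocation leaves the optimal value unchanged. First I would fix an optimal pair $(\mathbf{B}^\star,\mathbf{P}^\star)$ of \eqref{Relaxed resource allocation problem} and freeze the power at $\mathbf{P}^\star$. With the power held fixed, the objective $\sum_{m}\sum_{k} b_k^{(m)}\Delta f\log_2(1+p_k^{(m)}\gamma_k^{(m)})$ becomes a \emph{linear} function of $\mathbf{B}$, and the constraints that act on $\mathbf{B}$, namely \eqref{sum of b leq 1} and the nonnegativity in \eqref{non-negative b, p}, decouple across subcarriers into $M$ independent simplices $\{(b_1^{(m)},\dots,b_K^{(m)}):\sum_k b_k^{(m)}\le 1,\ b_k^{(m)}\ge 0\}$.

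The key step is to invoke the fact that a linear objective over such a simplex attains its maximum at a vertex. The vertices of each per-subcarrier simplex are precisely the origin and the standard basis vectors $\mathbf{e}_k$, i.e., the binary exclusive allocations. Since every coefficient $\Delta f\log_2(1+p_k^{(m)\star}\gamma_k^{(m)})$ is nonnegative (the argument of the logarithm is at least one), the origin is never strictly better than assigning the subcarrier to its best user; hence the per-subcarrier maximum equals $\max_k \Delta f\log_2(1+p_k^{(m)\star}\gamma_k^{(m)})=\lambda^{(m)}$ and is attained at $\mathbf{e}_{k^{\star(m)}}$ for any $k^{\star(m)}\in\mathcal{K}_m$ as defined in \eqref{Optimality condition 1}.

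I would then argue that $\mathbf{B}^\star$ must itself be optimal for this induced linear program, for otherwise a better $\mathbf{B}$ paired with $\mathbf{P}^\star$ would strictly improve the joint objective, contradicting optimality of $(\mathbf{B}^\star,\mathbf{P}^\star)$. Consequently, the binary exclusive allocation that sets $b_{k^{\star(m)}}^{(m)}=1$ and $b_k^{(m)}=0$ otherwise, while retaining $\mathbf{P}^\star$, achieves exactly the same objective value; it is feasible for the original binary problem because the power constraints \eqref{power budget constraint} do not involve $\mathbf{B}$. This preserves the optimal value and completes the argument. The main subtlety to handle carefully is the joint coupling between $\mathbf{B}$ and $\mathbf{P}$: one must verify that freezing $\mathbf{P}^\star$ and treating the residual problem as a linear program in $\mathbf{B}$ is legitimate, and that ties in \eqref{Optimality condition 1} (i.e., $|\mathcal{K}_m|>1$) are harmless since any single representative of $\mathcal{K}_m$ yields the same maximal per-subcarrier rate.
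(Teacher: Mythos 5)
Your proof is correct, but it follows a genuinely different route from the paper's. The paper stays entirely inside the KKT machinery it has just built: it observes from \eqref{combined1} and \eqref{complementary1} that every user with $b_k^{(m)}>0$ attains the common rate $\lambda^{(m)}$ on subcarrier $m$, so the subcarrier's contribution $\sum_{k\in\mathcal{K}_m}b_k^{(m)}\lambda^{(m)}=\lambda^{(m)}$ is invariant under any redistribution of the fractional weights within $\mathcal{K}_m$, including concentrating all weight on a single user. You instead freeze $\mathbf{P}^\star$, note that the residual problem in $\mathbf{B}$ is a linear program over a product of per-subcarrier simplices whose vertices are exactly the binary exclusive allocations, and conclude by an extreme-point argument that some vertex attains the relaxed optimum; feasibility survives because \eqref{power budget constraint} does not involve $\mathbf{B}$. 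Your argument is more self-contained and arguably more robust: it does not rely on the KKT conditions being satisfied (a nontrivial point here, since the relaxed objective $b_k^{(m)}\log_2(1+p_k^{(m)}\gamma_k^{(m)})$ is not jointly concave in $(b_k^{(m)},p_k^{(m)})$, so invoking KKT as necessary conditions requires a constraint-qualification caveat the paper glosses over), and it works for an arbitrary optimal $\mathbf{P}^\star$ without needing the equal-rate characterization. What the paper's version buys is brevity and a direct link to the quantities $\lambda^{(m)}$ and $\mathcal{K}_m$ that drive the subsequent greedy algorithm. The subtleties you flag at the end (legitimacy of freezing $\mathbf{P}^\star$, harmlessness of ties in $\mathcal{K}_m$) are handled correctly by your own reasoning, so no gap remains.
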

\begin{proof}
For subcarrier $m$, $b_k^{(m)}>0$ if $k\in\mathcal{K}_m$, and $b_k^{(m)}=0$ if $k\notin\mathcal{K}_m$. Thus, $\sum_{k=1}^K b_k^{(m)}=\sum_{k\in\mathcal{K}_m} b_k^{(m)}$. When subcarrier $m$ is allocated to a user $k$, then  $\Delta f\log_2\big(1+p_k^{(m)}\gamma_k^{(m)}\big)=\lambda^{(m)}>0$ for all $k\in\mathcal{K}_m$. Since $\sum_{k\in\mathcal{K}_m}b_k^{(m)}=1$ from Eq. \eqref{complementary1}, the following equality holds:
\begin{align}
\sum_{k\in\mathcal{K}_m} b_k^{(m)}\underbrace{\Delta f\log_2(1+p_k^{(m)}\gamma_k^{(m)})}_{=\lambda^{(m)}}=\lambda^{(m)}\sum_{k\in\mathcal{K}_m} b_k^{(m)}=\lambda^{(m)}.
\end{align}
\tcb{Therefore, the maximum rate of subcarrier $m$ remains unchanged as $\lambda^{(m)}$ regardless of any combination of $b_k^{(m)}$ values satisfying $\sum_{k\in\mathcal{K}_m}b_k^{(m)}=1$, allowing exclusive subcarrier allocation as in the original discrete optimization problem.}
\end{proof} 

From Eqs. \eqref{combined3} and \eqref{combined4}, if $k$-th user's power is not allocated to $m$-th subcarrier, i.e., $p_k^{(m)}=0$, then $\frac{\Delta f \gamma_k^{(m)} b_k^{(m)}}{\ln2\left(1+p_k^{(m)}\gamma_k^{(m)}\right)}\leq\mu_k$, and if $k$-th user's power is allocated to $m$-th subcarrier, i.e., $p_k^{(m)}>0$, then $\frac{\Delta f \gamma_k^{(m)} b_k^{(m)}}{\ln2\left(1+p_k^{(m)}\gamma_k^{(m)}\right)}=\mu_k$. This condition is satisfied if $k$-th user's power is allocated to subcarrier $m_k^\star$ such that
\begin{align}
m_k^\star=\underset{m}{\mathrm{argmax}}\,\frac{\Delta f \gamma_k^{(m)} b_k^{(m)}}{\ln2\left(1+p_k^{(m)}\gamma_k^{(m)}\right)}=\underset{m}{\mathrm{argmax}}\,\frac{\gamma_k^{(m)} b_k^{(m)}}{1+p_k^{(m)}\gamma_k^{(m)}}.
\label{Optimality condition 2}
\end{align}
While Eq. \eqref{Optimality condition 2} establishes the conditions for power allocation to subcarriers, it does not explicitly describe the optimal power allocation amounts. Note that the original problem \eqref{Original Subcarrier/power allocation} is a convex optimization problem for a given subcarrier allocation $\mathbf{B}$, and the optimal $\mathbf{P}$ can be obtained through \tcr{a} water-filling \cite{kim2005joint}. Specifically, from Eqs. \eqref{stationary2} and \eqref{complementary4}, the optimal power allocation is given by
\begin{align}
p_k^{\star(m)}=\left[\frac{\Delta f b_k^{(m)}}{\mu_k\ln2}-\frac{1}{\gamma_k^{(m)}}\right]^+.
\end{align}
Defining the water-filling level as
$\Delta_k\triangleq\frac{\Delta f b_k^{(m)}}{\mu_k\ln2}$, we ensure it satisfies the power budget constraint in Eq. \eqref{power budget constraint}. Under the condition that $p_k^{(m)}>0$ for subcarrier allocation, the optimality condition in Eq. \eqref{Optimality condition 2} can be reformulated as:
\begin{align}
m_k^\star=\underset{m}{\mathrm{argmax}}\,\frac{\gamma_k^{(m)} b_k^{(m)}}{1+\Big(\Delta_k-\frac{1}{\gamma_k^{(m)}}\Big)\gamma_k^{(m)}}=\underset{m}{\mathrm{argmax}}\,b_k^{(m)}.
\label{optimality condition 3}
\end{align}
Since Proposition 2 establishes that the exclusive subcarrier allocation preserves objective function value, Eq. \eqref{optimality condition 3} trivially implies that power should be allocated only to subcarriers with nonzero $b^{(m)}_k$. Thus, the necessary condition that requires attention is Eq. \eqref{Optimality condition 1}. However, finding a solution remains challenging due to the interdependence of subcarrier and power allocations in Eq. \eqref{Optimality condition 1}. Therefore, an explicit algorithm is needed to jointly optimize both subcarrier and power allocation.
\subsection{Greedy Algorithm for Joint Subcarrier \& Power Allocation}
\begin{algorithm}[!t]
\caption{Greedy Algorithm for JSPA}\label{Algorithm_greedy}
\begin{algorithmic}[1]
\State \textbf{Input}: $\{\gamma_k^{(m)}\}_{\forall k, m}$, $\{P_k\}_{\forall k}$
\State \textbf{Initialize}: $\mathbf{B}\leftarrow \mathbf{0}_{K\times M}$, $\mathbf{P}\leftarrow \mathbf{0}_{K\times M}$
\State Define unallocated subcarrier set $\mathcal{M}_{\sf un}\leftarrow \{1,\cdots,M\}$
\While{$\mathcal{M}_{\sf un}\neq\emptyset$}
   \State Randomly select a subcarrier $m$ from $\mathcal{M}_{\sf un}$
   \State Temporarily allocate $m$ to all users and update $\mathbf{P}$
   \State \tcb{Select the single user $k^{\star(m)}=\underset{k}{\mathrm{argmax}}\,p_k^{(m)}\gamma_k^{(m)}$}
   \State \tcb{Exclusively allocate subcarrier $m$ to user $k^{\star(m)}$}
   \State $\mathcal{M}_{\sf un}\leftarrow\mathcal{M}_{\sf un}\setminus\{m\}$
\EndWhile
\State Update $\mathbf{P}$ using water-filling
\State \textbf{Output}: Optimized $\{\mathbf{B}^\star, \mathbf{P}^{\star}\}$ 
\end{algorithmic}
\end{algorithm}
We propose a greedy algorithm to efficiently address the intertwined condition in Eq. \eqref{Optimality condition 1}, as outlined in Algorithm 2. \tcb{The key insight of our greedy approach is to employ an iterative evaluation strategy. At each stage of the greedy selection process, a subcarrier $m$ is randomly chosen from the unallocated subcarrier set and temporarily assigned to all users. Then, power allocation is performed through \tcr{a} water-filling method, and the user $k^{\star(m)}$ that satisfies Eq. \eqref{Optimality condition 1} is identified. Subsequently, subcarrier $m$ is allocated to user $k^\star$. \tcr{This} selection process repeats until all subcarriers are allocated. This strategy considers channel quality and individual power constraints jointly, rather than simply selecting the user with the best channel condition (\tcr{which is} typically downlink optimal \cite{kim2005joint}).}

\tcb{\begin{remark}{\rm \textbf{(Computational Complexity of Algorithm 2):}} The computational complexity of the proposed JSPA algorithm consists of two components: (i) best user search requiring $\mathcal{O}(MK)$ operations across $M$ iterations, and (ii) water-filling computation requiring $\mathcal{O}(\sum_{k=1}^{K} S_{i,k} \log S_{i,k})$ at the $i$-th iteration \cite{khakurel2014generalized}, where $S_{i,k}$ denotes the number of subcarriers allocated to user $k$ with $\sum_{k=1}^K S_{i,k} = i$. The overall complexity is $\mathcal{O}(MK + \sum_{i=1}^{M} \sum_{k=1}^{K} S_{i,k} \log S_{i,k})$. In the worst case, when subcarriers are uniformly distributed among users (i.e., $S_{i,k} \approx i/K$ for all $k$), the complexity becomes approximately $\mathcal{O}(MK + M^2 \log \frac{M}{K})$ from the integral approximation of a sum, which is significantly more efficient than exhaustive search requiring $\mathcal{O}(K^M)$.
\end{remark}}

\section{Numerical Results}

\begin{table}[!t]
\label{Simulation Parameter Table}
\centering
\caption{Simulation Parameters}
\renewcommand{\arraystretch}{1.5}
\begin{tabular}{P{0.13\linewidth}P{0.16\linewidth}P{0.3\linewidth}P{0.25\linewidth}}
\Xhline{1.5pt} 
\textbf{} & \textbf{Abbreviation} & \textbf{Parameter} & \textbf{Value}\\
\Xhline{1.5pt} 
\cellcolor{White} & $f_{\sf c}$ & Center frequency & 14 GHz\\
\rowcolor{Gray}
\cellcolor{White} & - & Bandwidth & \tcb{0.7 - 2.1} GHz\\
\cellcolor{White} & $M$ & Number of subcarriers & 1,024\\
\rowcolor{Gray}
\multirow{-4}[0]{*}{\cellcolor{White}\textbf{Channel}} & $\kappa_k$ & Rician factor & 10 dB\\
\hline
\cellcolor{White} & $N_{\sf sat}$ & Number of antennas & 64 (8×8 URA)\\
\rowcolor{Gray}
\cellcolor{White} & $G_{\sf sat}$ & Antenna gain & 0 dBi\\
\multirow{-3}[0]{*}{\cellcolor{White}\textbf{Satellite}} & - & Number of RF chain & 1\\
\hline
\rowcolor{Gray}
\cellcolor{White} & $K$ & Number of users & 2 - 256\\
\cellcolor{White} & - & Number of antennas & 1\\
\rowcolor{Gray}
\cellcolor{White} & $G_{\sf ut}$ & Antenna element gain & 43.2 dBi\\
\multirow{-4}[0]{*}{\cellcolor{White}\shortstack{\textbf{User}\\\textbf{terminal}}} & $P_k$ & Power budget & 23 dBm\\
\Xhline{1.5pt} 
\end{tabular}
\renewcommand{\arraystretch}{1}
\end{table}

\tcb{To evaluate the performance of the proposed system, we conduct realistic simulations considering 3D geometry and Earth's curvature, where the satellite orbits at \num{500} km altitude with ground users randomly distributed within a \num{500} km radius area on the Earth's surface. The average channel power is given by \cite{you2020massive, you2022beam}
\begin{equation}
\eta_k^{(m)}=G_{\sf sat}G_{\sf ut}\left(\frac{c}{4\pi f_m d_k}\right)^2,\,\forall k,m,
\end{equation}
where $G_{\sf sat}$ and $G_{\sf ut}$ denote the satellite and user terminal antenna gains, respectively.} $d_k$ represents the distance between the satellite and $k$-th user; $c$ is the speed of light. \tcb{The noise power is $\sigma^2=\mathcal{B}\Delta fT$, where $\mathcal{B}$ is the Boltzmann constant; $T$ is the noise temperature set to $T=290$ K. The remaining simulation parameters are listed in Table I.}
\subsection{Performance of Proposed Rainbow BF Optimization Algorithm and Beam Footprints}
\begin{figure*}[!ht]
\centering
\includegraphics[width=1\linewidth]{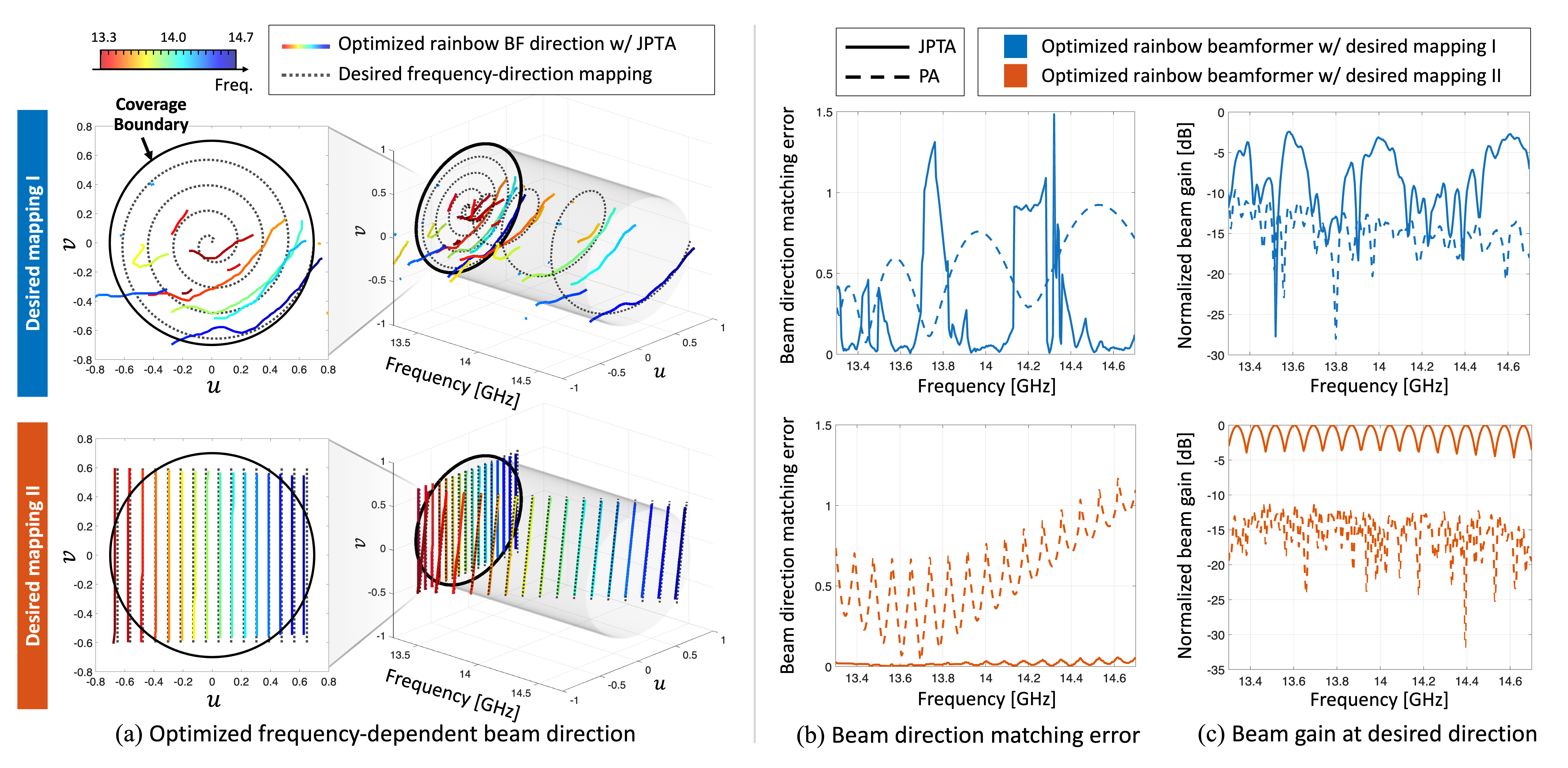}
\caption{\tcb{Beamforming performance comparison under two different frequency-direction mappings. (a): frequency-dependent beam direction in frequency-UV plots and UV-plane projections, (b): beam direction matching error with JPTA and PA, (c): beam gain achieved at desired directions across frequencies.}}
\label{[Simulation]rainbowBFPerformance}
%\vspace{-4mm}
\end{figure*}
\begin{figure}[!ht]
\centering
\includegraphics[width=1\linewidth]{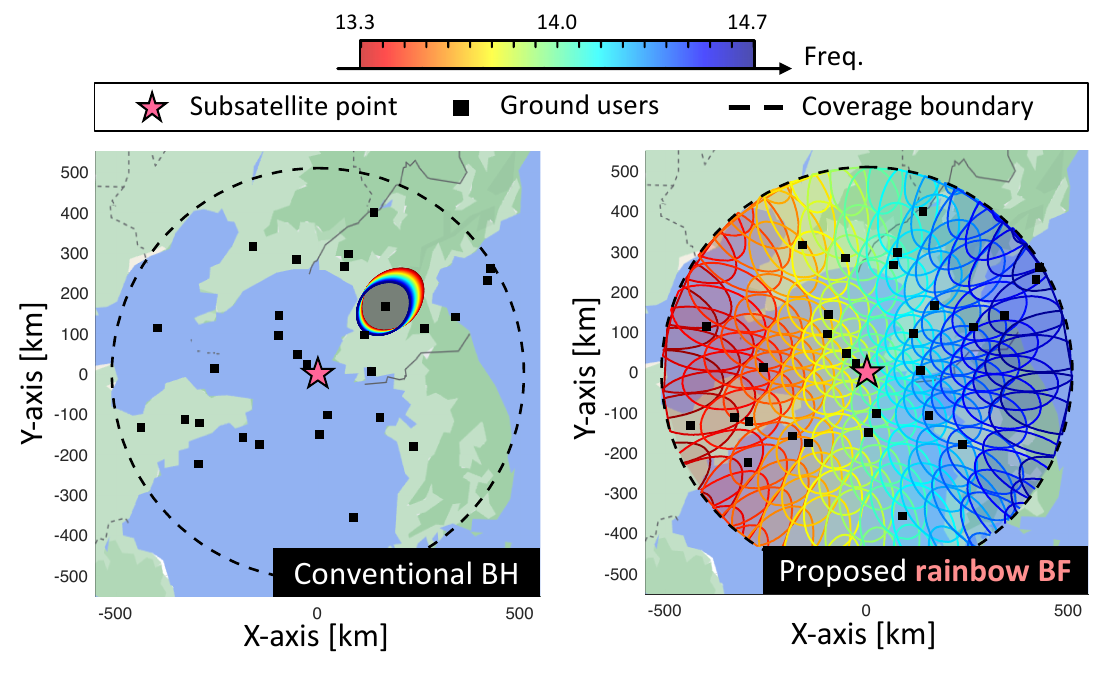}
\caption{\tcb{3 dB beam footprints on Earth's surface (top view) at a specific time slot comparing BH (without beam-squint effects) showing narrow beam footprints and rainbow BF fully covering the entire coverage area with $K=32$ users. Beam colors represent subcarrier frequencies. A subset of subcarriers is sampled for visualization.}}
\label{[Simulation]Footprints}
%\vspace{-5mm}
\end{figure}

We validate the feasibility and evaluate the numerical performance of the proposed rainbow BF optimization algorithm. \tcbb{In the 1D line search for rainbow beamformer optimization, we performed a grid search over \num{2000} points ranging from \num{0} to \num{50} ns with a step size of \num{25} ps.} Among numerous potential frequency-direction mappings, we analyze two representative \tcb{fixed mappings} shown in Fig. \ref{Rainbow BF patterns}. The frequency-dependent beam directions of the optimized JPTA beamformer and their corresponding UV-plane projections are illustrated in Fig.~\ref{[Simulation]rainbowBFPerformance} (a). \tcb{Here, the beam direction for the $m$-th subcarrier of the optimized beamformer $(u_{\sf opt}^{(m)},v_{\sf opt}^{(m)})$ is measured as the maximum beam gain direction, given by
\begin{align}
\left(u_{\sf opt}^{(m)},v_{\sf opt}^{(m)}\right)=\underset{u, v}{\mathrm{argmax}}\,\left\vert \mathbf{w}(\mathbf{T},\boldsymbol{\Phi})^{\sf H}\mathbf{a}^{(m)}(u,v)\right\vert^2, \forall m.
\end{align}} The desired mappings I and II are designed to scan the coverage area using spiral and multiple parallel line trajectories, respectively. \tcb{The numerical results show that desired mapping II achieves \tcr{better} frequency-dependent beam direction matching than mapping I.} Fig.~\ref{[Simulation]rainbowBFPerformance} (b) and (c) compare the beam direction matching error and beam gain degradation between conventional PA and JPTA. \tcb{The beam direction matching errors are measured as $\Vert [u_{\sf opt}^{(m)}-u_{\sf des}^{(m)},v_{\sf opt}^{(m)}-v_{\sf des}^{(m)}]\Vert$ for each subcarrier $m$. The PA case is obtained by setting $\tau_{\sf max}=0$ in Algorithm 1.} \tcb{While conventional PA suffers from significant beam gain loss and beam direction matching error for both frequency-direction mappings due to the limitation of frequency-flat phase shifts of the PS, JPTA performance varies with different mappings in terms of beam direction matching and BF gain degradation. Notably, JPTA with mapping II achieves superior performance by preserving high beam gains toward the desired directions. In the following, we focus on mapping II for the performance evaluation of rainbow BF.}

The 3 dB beam footprints on the Earth's surface for both BH and rainbow BF are illustrated in Fig. \ref{[Simulation]Footprints}. Conventional BH \tcb{with PA} exhibits nearly aligned beams across all subcarriers with beam-squint, which limits service coverage to a small area. In contrast, the proposed rainbow BF with mapping II steers beams at different subcarrier frequencies toward spatially diverse directions, thereby enabling simultaneous signal reception from multiple users across the entire coverage area.

%\begin{remark}
%{\rm \textbf{(Low-Cost Implementation with Fixed TTD):}} Our proposed 3D rainbow BF can be designed independently of user locations, enabling a low-cost implementation with fixed TTD elements that have preconfigured delay values and cannot be modified post-fabrication \cite{yan2022energy}. With these fixed TTD elements, quasi-Earth-fixed beams (covering one geographic area for a limited period \cite{3gpp38.300}) can be maintained by adjusting only the PS values to compensate for satellite movement. Further technical advancements and practical implementation strategies will be explored in future research.
%\end{remark}

\subsection{Throughput Performance Evaluation}
\begin{figure*}[!ht]
\centering
\includegraphics[width=1\linewidth]{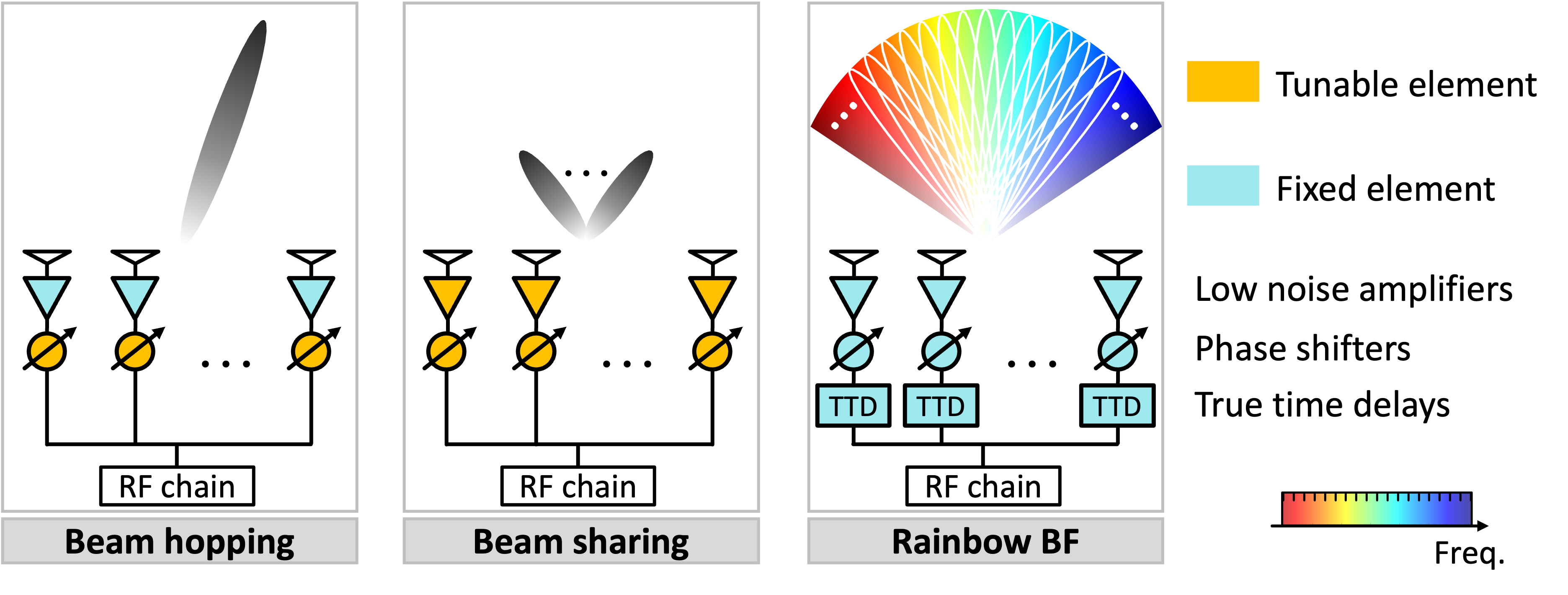}
\caption{\tcb{Antenna array structures for numerical evaluation of BH, beam sharing, and proposed rainbow BF. For BH, beam-squint can be eliminated by using TTD instead of PS.}}
\label{[Simulation]ArrayStructures}
\vspace{0mm}
\end{figure*}

\begin{figure*}[!ht]
\centering
\includegraphics[width=1\linewidth]{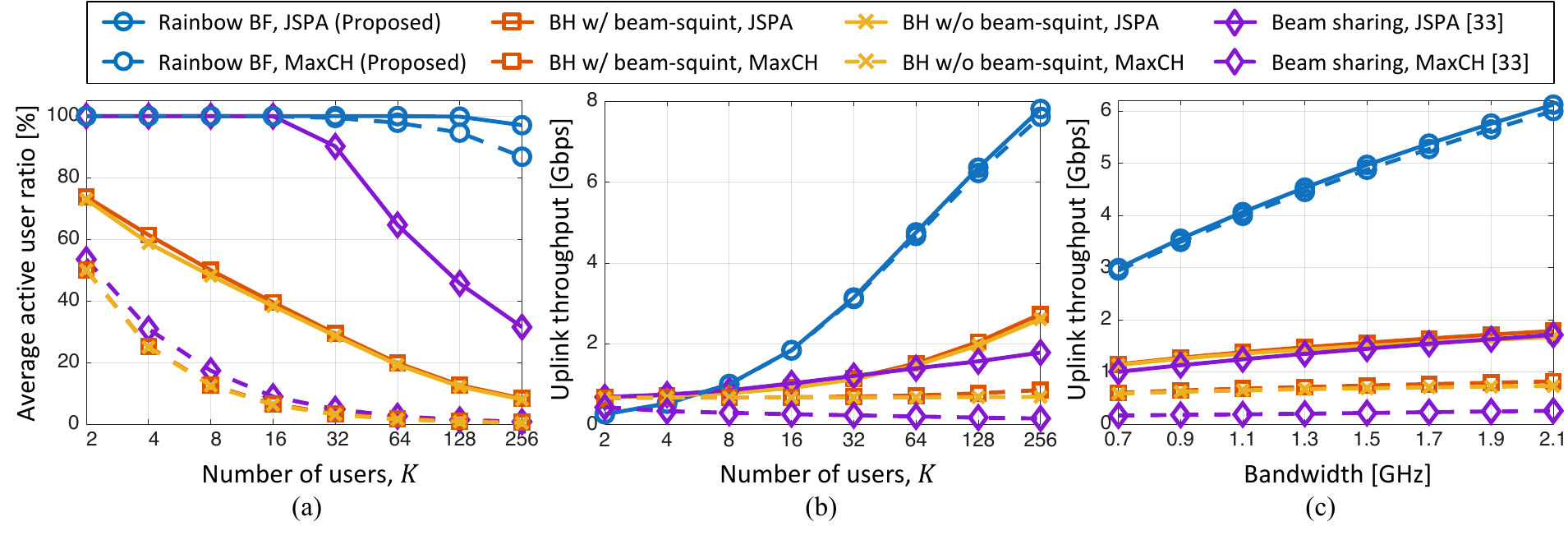}
\caption{\tcb{Performance comparison of conventional BH (with and without beam-squint), beam sharing, and proposed rainbow BF: (a) average active user ratio versus number of users, (b) uplink throughput versus number of users with bandwidth of \num{1.4} GHz, and (c) uplink throughput versus bandwidth with $K = \num{64}$ users.}}
\label{[Simulation]ActiveRatio_Throughput]}
%\vspace{-3mm}
\end{figure*}
We analyze the uplink throughput performance of the proposed rainbow BF scheme alongside baseline schemes. For simplicity, we set the total number of time slots equal to the number of users, i.e., $L=K$. The antenna structures for the baseline schemes and the proposed rainbow BF are described in Fig. \ref{[Simulation]ArrayStructures} and corresponding BF strategies are as follows:
\begin{itemize}
\item \tcb{\textbf{Beam hopping}: For BH, we consider two cases: BH with beam-squint using PS-only arrays with fixed low noise amplifiers (LNAs) and tunable PSs, and BH without beam-squint using TTD-only arrays with fixed LNAs and tunable TTDs. Specifically, for BH with beam-squint, the beamformer at each time slot is
\begin{equation}
\mathbf{w}^{(m)}\big(\mathbf{0},\mathbf{\Phi}^{(\ell)}\big)=\mathbf{a}^{(\sf {c})}(u_\ell,v_\ell), \forall m, \ell,
\end{equation}
where $\mathbf{a}^{\sf (c)}(u,v)$ is the array response vector at center frequency $f_{\sf c}$. For BH without beam-squint, the beamformer is given by
\begin{equation}
\mathbf{w}^{(m)}\big(\mathbf{T}^{(\ell)},\mathbf{0}\big)=\mathbf{a}^{(m)}(u_\ell,v_\ell), \forall m, \ell.
\end{equation}
During each time slot, the beam with maximum gain is directed to a single user, and the beam direction sequentially changes over time to accommodate all users in the coverage area.}
\item \textbf{Beam sharing}: In beam sharing, tunable LNAs and PSs are employed for flexible multi-beam generation.
The beamformer is given by \cite{he2022physical}
\begin{equation}
\mathbf{w}^{(m)}\big(\mathbf{0},\mathbf{\Phi}^{(\ell)}\big)=\frac{\sqrt{N_{\sf rx}}\sum_{k=1}^K\mathbf{a}^{\sf (c)}(u_k,v_k)}{\Vert\sum_{k=1}^K\mathbf{a}^{\sf (c)}(u_k,v_k)\Vert}, \forall m, \ell,
\end{equation}
where the beamformer weights have variable amplitudes enabled by tunable LNAs.

\item \textbf{Rainbow BF}: Unlike BH and beam sharing, the proposed rainbow BF steers beams at different subcarrier frequencies towards distributed directions across the coverage area, independent of user location information. This design principle allows for a cost-effective implementation using fixed LNAs, PSs, and TTDs. The beamformer $\mathbf{w}^{(m)}(\mathbf{T}^{(\ell)},\mathbf{\Phi}^{(\ell)})$ remains time-invariant and is optimized using Algorithm 1 with desired mapping II.
\end{itemize} \tcb{The three schemes embody fundamentally different philosophies for achieving multi-user coverage. BH adopts a time-division approach, sequentially directing a single full-gain beam. Beam sharing simultaneously generates multiple beams toward specific user locations while accepting beam gain degradation. In contrast, rainbow BF directs near-full-gain beams in multiple directions with frequency-dependent BF, collectively covering the entire coverage area.}

Fig. \ref{[Simulation]ActiveRatio_Throughput]} presents the numerical performance of rainbow BF along with baseline schemes. Here, MaxCH represents the maximum average channel SNR-based subcarrier allocation with water-filling power allocation, where subcarrier $m$ is allocated to user $k^{\star(m)}$ such that $k^{\star(m)}=\underset{k}{\mathrm{argmax}}\,\gamma_k^{(m)}$. The active user ratio is measured as the proportion of users allocated at least one subcarrier out of the total $K$ users in a given time slot. As shown in Fig. \ref{[Simulation]ActiveRatio_Throughput]} (a), rainbow BF achieves an almost \num{100}\tcr{\%} active user ratio due to its spatially distributed beams across different subcarrier frequencies. In contrast, conventional BH is limited to serving a small area per time slot, leading to a significant drop in active user ratio as the number of users increases. \tcb{Notably, when $K=256$, rainbow BF outperforms conventional BH by a factor of \num{12.1}. This improvement allows users to utilize \num{12.1} times more total uplink power budget even under the identical per-user power constraints.} \tcb{Among the two BH schemes, BH with beam-squint slightly outperforms BH without beam-squint in terms of active user ratio because the spatially squinted frequency-dependent beams lead to marginally wider collective beam footprints across frequencies.} In the beam sharing baseline scheme, multiple beams are formed in each time slot. However, since all subcarriers share the same beam with only minor variations due to beam-squint effects, beam sharing can support more users than BH but fewer users than rainbow BF. The gap in active user ratios between JSPA and MaxCH stems from their distinct allocation strategies. Specifically, MaxCH assigns subcarriers to users with the highest average channel SNR (typically those nearest to the \tcb{beam center for each subcarrier}), whereas JSPA adaptively allocates subcarriers to users with maximum or near-maximal average channel SNR to optimize overall uplink power resources, thereby maximizing the total uplink throughput.

\tcb{Fig. \ref{[Simulation]ActiveRatio_Throughput]} (b) presents the uplink throughput versus the number of users $K$.} \tcb{The proposed rainbow BF outperforms BH when $K\geq 8$, benefiting from its nearly \num{100}\tcr{\%} active user ratio. For $K<8$, both BH and beam sharing outperform rainbow BF.} This is because rainbow BF spatially distributes frequency-dependent beams across a wide area regardless of user locations. At low user density, only a small portion of subcarriers' beams are directed toward users, while the remaining beams miss users or point to empty regions.
 In contrast, BH employs user location-aware BF to steer beams at all subcarrier frequencies toward targeted users. \tcbb{Remarkably, among the two BH schemes, BH with beam-squint slightly outperforms BH without beam-squint when the number of users is large. This is because the frequency-dependent beam spreading induced by beam-squint enlarges the effective beam footprint, allowing more users to fall within the beam coverage. In dense-user
scenarios, this benefit outweighs the beam gain degradation
caused by frequency-dependent beam misalignment}
% allows more users to be located within the beam footprint, and this effect outweighs the impact of beam misalignment.}
\tcb{For $K\geq 8$, rainbow BF excels due to its higher active user ratio and greater total uplink transmission power.} Its performance improves with increasing user density, as the probability of users aligning with subcarrier's beam increases, leading to more efficient frequency resource utilization. \tcb{Notably, the performance gap between conventional BH and rainbow BF widens with the increasing number of users $K$, reaching a \num{2.8}-fold increase in uplink throughput compared to BH at $K=256$.} This \emph{scalability} with increasing user numbers makes rainbow BF particularly well-suitable for LEO SATCOM, where a single satellite must serve a massive number of users. In addition, beam sharing marginally outperforms conventional BH for $4\leq K \leq 32$ due to its higher active user ratio. However, its performance declines below BH for $K>32$ as increased simultaneous beams cause significant beam gain degradation.

 \tcb{Finally, Fig. \ref{[Simulation]ActiveRatio_Throughput]} (c) shows the uplink throughput versus bandwidth for a fixed number of users, $K = 64$. The numerical results show that the proposed rainbow BF delivers significantly higher uplink throughput compared to conventional benchmark schemes across all considered bandwidth ranges. Furthermore, the steep slope of the performance curve indicates that our approach has superior scalability with respect to bandwidth.}
 
 \tcb{In summary, Fig. \ref{[Simulation]ActiveRatio_Throughput]} demonstrates that the proposed rainbow BF can boost both the number of simultaneously served users within the coverage area and the uplink throughput, while providing scalability in terms of both the number of users and bandwidth.}\footnote{\tcb{Although we considered VSAT-type users with high antenna gain\tcr{s} of \num{43.2} dBi for simulation, the proposed rainbow BF framework is readily extensible to direct-to-cell (D2C) scenarios with handheld users.}}

\tcb{\begin{remark} \rm\textbf{(Implications of Numerical Results for LEO Satellite Constellation Design)}: Our performance evaluations show that rainbow BF achieves approximately 2.8 times uplink throughput and support 12.1 times more simultaneous users compared to conventional BH systems. This implies that fewer satellite are required to deliver equivalent performance, thereby simplifying system deployment and reducing overall resource requirements for LEO satellite constellation design.
\end{remark}}
\subsection{Subcarrier \& Power Allocation Performance}
\begin{figure}[t]
\centering
\includegraphics[width=0.85\linewidth]{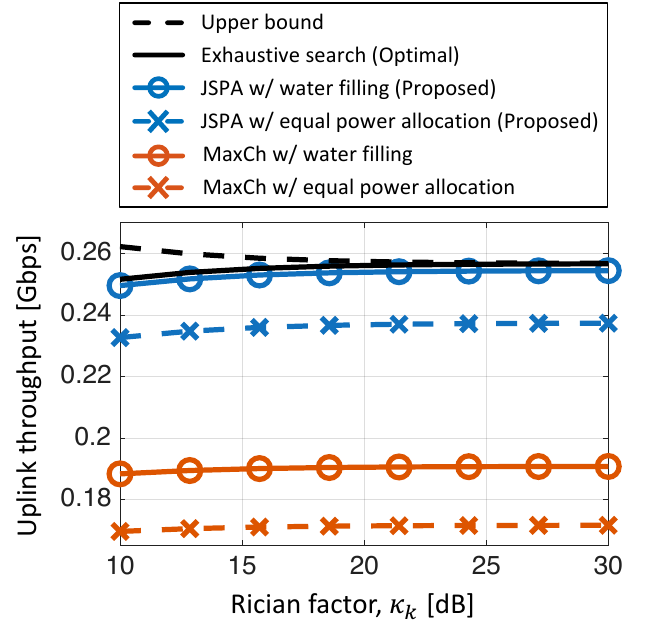}
\caption{\tcb{Uplink throughput comparison across different Rician factors for various subcarrier and power allocation schemes compared with the upper bound and the optimal value from exhaustive search (\num{10000} channel realizations).}}
\label{[Simulation]Resource_Allocation]}
\vspace{0mm}
\end{figure}
\tcb{Fig. \ref{[Simulation]Resource_Allocation]} presents the performance of the proposed JSPA algorithm under varying Rician factor $\kappa_k$ in the rainbow BF system. We evaluate six schemes: the proposed JSPA algorithm with water-filling, the proposed JSPA algorithm with equal power allocation, MaxCH with water-filling, MaxCH with equal power allocation, the upper bound, and the optimal scheme. The upper bound is obtained through an exhaustive search using perfect channel SNR information. Note that we assume perfect channel SNR information is not available due to short channel coherence time; hence, this provides a theoretical upper bound for evaluation of effectiveness of proposed schemes. The exhaustive search (optimal) performance is obtained by solving problem \eqref{Original Subcarrier/power allocation} using statistical channel SNR information via exhaustive search. We set $K=5$ and $M=8$ here for the exhaustive search computation. While our JSPA algorithm does not theoretically guarantee optimality, numerical results demonstrate that it achieves near-optimal performance. Furthermore, when the Rician factor $\kappa_k$ is sufficiently large, the performance of the proposed JSPA algorithm becomes closer to the upper bound even when relying on only partial CSI, demonstrating the effectiveness of the proposed approach. Notably, at $\kappa_k$ = 30 dB, the performance of the proposed JSPA with water-filling is only 0.87\tcr{\%} lower than the upper bound.}

\subsection{\tcb{Run Time Performance Evaluation}}

\begin{figure}[t]
\centering
\includegraphics[width=1\linewidth]{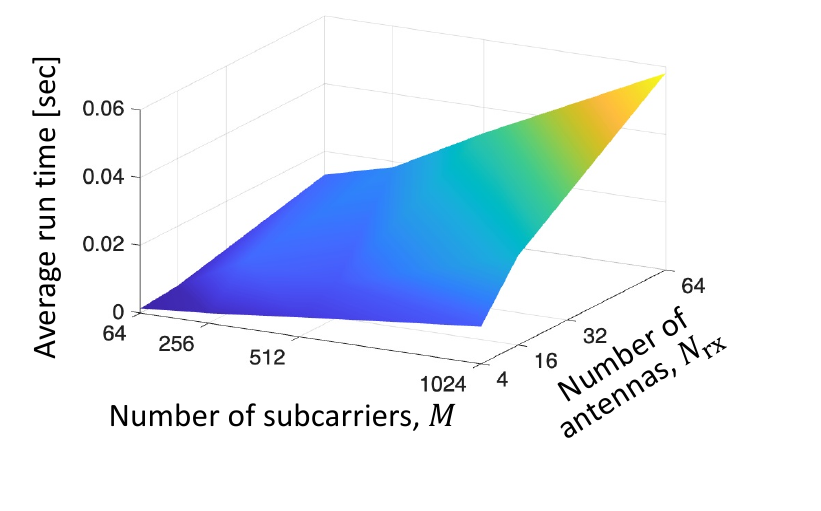}
\caption{\tcb{Average run time of the proposed rainbow BF algorithm with varying numbers of subcarriers and antenna elements.}}
\label{[Simulation]Algorithm1_Runtime}
\vspace{0mm}
\end{figure}

\tcb{To evaluate the practical feasibility of the proposed rainbow BF design algorithm, we analyze the runtime performance of Algorithm 1 with frequency-direction mapping II. 
Fig. \ref{[Simulation]Algorithm1_Runtime} presents the average run time for Algorithm 1 as a function of the number of subcarriers $M$ and antenna elements $N_{\sf rx}$. The simulations were conducted on MATLAB using a commercial laptop (Apple MacBook Pro with M4 Pro processor). We performed 1D line searches over \num{2000} grid points for each antenna element. When $M = \num{1024}$ subcarriers and $N_{\sf rx} = \num{64}$ antenna elements, the algorithm completes in 0.06 seconds. The sub-second computation times suggest real-world on-board processing feasibility, even on commercial processors without specialized hardware acceleration. Furthermore, the quasi-linear scaling behavior of computational complexity as a function of both the numbers of subcarriers and antenna elements establishes superior scalability.}

\section{Conclusion}
In this paper, we have proposed a novel rainbow BF approach that embraces beam-squint to fundamentally address the uplink throughput bottleneck in conventional BH-based LEO SATCOM. We have developed the alternating and decomposition-based optimization algorithm to achieve the desired 3D rainbow BF. \tcb{Extensive numerical results demonstrated that the proposed rainbow BF system achieves up to an \num{12.1}-fold increase in active user ratio and a \num{2.8}-fold improvement in uplink throughput compared to conventional BH.} We envision that this work will transform the paradigm in wideband LEO SATCOM, turning beam-squint from an avoidable limitation into a new degree-of-freedom for BF design and resource allocation, thereby inspiring future innovations in wideband 3D SATCOM. \tcb{Future research directions include developing adaptive frequency-direction mapping strategies that account for heterogeneous user distributions and varying traffic demands. Additionally, extending the proposed approach to multiple RF-chain architectures could further enhance system capacity. Finally, analyzing inter-satellite interference and developing mitigation techniques for multi-satellite constellation scenarios represents another promising avenue for investigation.}

\appendices
\section{Proof of Proposition 1}
\tcb{We prove the proposition by contradiction. Suppose there exists a solution $\{\mathbf{T}^\star,\boldsymbol{\Phi}^\star\}$ that satisfies Eq. \eqref{condition_Fullgain}. Let us select two distinct subcarriers $p$ and $q$ $(p\neq q)$ from the set of all subcarriers. From the assumption, the BF gains for both subcarriers can be maximized in their respective desired directions, i.e., 
\begin{align}
\mathbf{w}^{(r)}\left(\mathbf{T}^\star,\boldsymbol{\Phi}^\star\right) &= \alpha^{(r)}\mathbf{a}^{(r)}\left(u_{\sf des}^{(r)},v_{\sf des}^{(r)}\right),\,r\in\{p,q\}.
\label{Proposition 1 eq1}
\end{align}
$(N_x(n_x-1)+n_y)$-th element of vector equation \eqref{Proposition 1 eq1}, corresponding to the $(n_x,n_y)$-th antenna element, is given by
\begin{align}
&e^{j\left\{\phi^{\star(n_x,n_y)}-2\pi f_p\tau^{\star(n_x,n_y)}\right\}} 
\nonumber\\
&\qquad\qquad\qquad= \alpha^{(p)} e^{-j\pi\frac{f_p}{f_{\sf c}}\left\{(n_x-1)u_{\sf des}^{(p)}+(n_y-1)v_{\sf des}^{(p)}\right\}},
\label{[Proposition 1] eq1}
\\
&e^{j\left\{\phi^{\star(n_x,n_y)}-2\pi f_q\tau^{\star(n_x,n_y)}\right\}} 
\nonumber\\
&\qquad\qquad\qquad= \alpha^{(q)} e^{-j\pi\frac{f_q}{f_{\sf c}}\left\{(n_x-1)u_{\sf des}^{(q)}+(n_y-1)v_{\sf des}^{(q)}\right\}}.
\label{[Proposition 1] eq2}
\end{align}
Dividing both sides of Eq. (\ref{[Proposition 1] eq1}) by the corresponding sides of Eq. (\ref{[Proposition 1] eq2}) yields the following equation: 
\begin{align}
&e^{-j2\pi(f_p-f_q)\tau^{\star(n_x,n_y)}}\nonumber
\\
&=\frac{\alpha^{(p)}}{\alpha^{(q)}}e^{-j\frac{\pi}{f_{\sf c}}\left\{(n_x-1)\left(f_p u_{\sf des}^{(p)}-f_q u_{\sf des}^{(q)}\right)+(n_y-1)\left(f_p v_{\sf des}^{(p)}-f_q v_{\sf des}^{(q)}\right)\right\}}.
\end{align}
By assumption, this equation holds for all subcarriers $(p,q)$ and antenna elements $(n_x,n_y)$. Without loss of generality, let us consider two adjacent antenna elements in the $x$-direction: $(n_x+1,n_y)$ and $(n_x,n_y)$. Dividing their equations results in
\begin{align}
e^{-j2\pi(f_p-f_q)\left(\tau^{\star(n_x+1,n_y)}-\tau^{\star(n_x,n_y)}\right)}=e^{-j\frac{\pi}{f_{\sf c}}\left(f_p u_{\sf des}^{(p)}-f_q u_{\sf des}^{(q)}\right)}.
\label{[Proposition 1] eq3}
\end{align}
Then, considering the cyclic nature of complex exponential i.e., $e^{j\phi} = e^{j(\phi + 2\pi \beta)}$ for any integer $\beta$, we can re-express Eq. \eqref{[Proposition 1] eq3} as follows:
\begin{align}
&\tau^{\star(n_x+1,n_y)}-\tau^{\star(n_x,n_y)}=\frac{u_{\sf des}^{(p)}f_p-u_{\sf des}^{(q)}f_q }{2f_{\sf c}(f_p-f_q)}+\frac{\beta}{2\pi(f_p-f_q)}.\label{[Proposition 1] eq4}
\end{align}
where $\beta$ is arbitrary integer. By using the fact that $\frac{ax-by}{x-y}=a+\frac{a-b}{x-y}y$, Eq. (\ref{[Proposition 1] eq4}) can be re-expressed as follows:
\begin{align}
&\tau^{\star(n_x+1,n_y)}-\tau^{\star(n_x,n_y)}
\nonumber\\
&\qquad\qquad\quad
=u_{\sf des}^{(p)}+\frac{u_{\sf des}^{(p)}-u_{\sf des}^{(q)}}{2f_{\sf c}(f_p-f_q)}f_q+\frac{\beta}{2\pi(f_p-f_q)}.\label{[Proposition 1] eq5}
\end{align}
We select an additional distinct subcarrier $s$ $(p\neq q\neq s)$; without loss of generality, $u_{\sf des}^{(s)}$ can be expressed as
\begin{align}
    u_{\sf des}^{(s)}=u_{\sf des}^{(p)}-\frac{f_q}{f_s}\frac{f_p-f_s}{f_p-f_q}\left(u_{\sf des}^{(p)}-u_{\sf des}^{(q)}\right)-\frac{2f_{\sf c}(f_p-f_s)}{f_s}\epsilon,
\end{align} with arbitrary real number $\epsilon\in\mathbb{R}$. Eq. (\ref{[Proposition 1] eq5}) for subcarrier $p$ and $s$ is given by
\begin{align}
&\tau^{\star(n_x+1,n_y)}-\tau^{\star(n_x,n_y)}
\nonumber\\
&\qquad
=u_{\sf des}^{(p)}+\frac{u_{\sf des}^{(p)}-u_{\sf des}^{(q)}}{2f_{\sf c}(f_p-f_q)}f_q+\epsilon+\frac{\beta'}{2\pi(f_p-f_s)},\label{[Proposition 1] eq6}
\end{align}
where $\beta'$ is arbitrary integer. By assumption, Eqs. (\ref{[Proposition 1] eq5}) and (\ref{[Proposition 1] eq6}) should hold simultaneously. Subtracting them yields
\begin{align}
\epsilon\overset{(\rm{a})}{=}\frac{\beta}{2\pi(f_p-f_q)}-\frac{\beta'}{2\pi(f_p-f_s)},
\end{align}
where $\beta$ and $\beta'$ are arbitrary integers. Here, (a) holds when 
\begin{equation}
\epsilon\in\left\{\frac{\beta}{2\pi(f_p-f_q)}-\frac{\beta'}{2\pi(f_p-f_s)}~\Bigg|~\beta,\beta'\in\mathbb{Z}\right\}.
\label{[Proposition 1] eq7}
\end{equation}
The condition does not hold in general. To illustrate, consider a counterexample where $f_p=\frac{1.5}{2\pi}$, $f_q=\frac{1}{2\pi}$, and $f_s=\frac{0.5}{2\pi}$. In this case, Eq.~\eqref{[Proposition 1] eq7} yields $\epsilon\in\{2\beta-\beta'|\beta,\beta'\in\mathbb{Z}\}=\mathbb{Z}$, which contradicts that $\epsilon$ is an arbitrary real number. Thus, our initial assumption is invalid, completing the proof.}

\begingroup
\color{black}
\bibliographystyle{IEEEtran}
\bibliography{bibfile}
\endgroup

\end{document}